\newtheorem{thm}{Theorem} 
\newtheorem{prop}[thm]{Proposition} 
\newtheorem{lem}[thm]{Lemma} 
\DeclareMathOperator{\indeg}{indeg}
\DeclareMathOperator{\outdeg}{outdeg}
\DeclareMathOperator{\CRS}{CRS}
\DeclareMathOperator{\MCRS}{MCRS}
\DeclareMathOperator{\RP}{RP}
\DeclareMathOperator{\MRP}{MRP}
\DeclareMathOperator{\ARP}{ARP}
\DeclareMathOperator{\Supp}{Supp}
\newcommand{\cT}{\mathcal{T}}
\newcommand{\cS}{\mathcal{S}}
\newcommand{\tC}{\text{\upshape\sffamily C}}
\newcommand{\tR}{\text{\upshape\sffamily R}}
\newcommand{\tN}{\text{\upshape\sffamily N}}
\newcommand{\tP}{\text{\upshape\sffamily P}}
\newcommand{\tS}{\text{\upshape\sffamily S}}
\newcommand{\tT}{\text{\upshape\sffamily T}}
\newcommand{\expair}[3]{\prescript{(#1,#2)}{}{#3}}
\newcommand{\aug}[2]{\prescript{#1}{}{#2}}
\title{Generation of orchard and tree-child networks} 
\author{Gabriel Cardona}
\address{Department of Mathematics and Computer Science,  University of the Balearic Islands, E-07122 Palma, Spain}
\author{Gerard Ribas}
\address{Higher Polytechnic School,  University of the Balearic Islands, E-07122 Palma, Spain}
\author{Joan Carles Pons}
\address{Department of Mathematics and Computer Science,  University of the Balearic Islands, E-07122 Palma, Spain}
\email{joancarles.pons@uib.es}
\begin{document}
\maketitle

\begin{abstract}  Phylogenetic networks are an extension of phylogenetic trees that allow for the representation of reticulate evolution events. One of the classes of networks that has gained the attention of the scientific community over the last years is the class of orchard networks, that generalizes tree-child networks, one of the most studied classes of networks.

  In this paper we focus on the combinatorial and algorithmic problem of the generation of orchard networks, and also of tree-child networks. To this end, we use that these networks are defined as those that can be recovered by a reversing a certain reduction process. Then, we show how to choose a ``minimum'' reduction process among all that can be applied to a network, and hence we get a unique representation of the network that, in fact, can be given in terms of sequences of pairs of integers, whose length is related to the number of leaves and reticulations of the network. 
  Therefore, the generation of networks is reduced to the generation of such sequences of pairs. Our main result is a recursive method for the efficient generation of all minimum sequences, and hence of all orchard (or tree-child) networks with a given number of leaves and reticulations.

An implementation in C of the algorithms described in this paper, along with some computational experiments, can be downloaded from the public repository~\url{https://github.com/gerardet46/OrchardGenerator}. Using this implementation, we have computed the number of orchard networks with at most $6$ leaves and $8$ reticulations.
\end{abstract}

\section{Introduction}

During decades, phylogenetic trees have been the model used to represent the  branching pattern for the evolution of a set of Operational Taxonomic Units (OTUs for short). From the 1980s onward, it became evident that phylogenetic networks were a more accurate framework, with the potential to cover more complex evolutionary scenarios such as hybridizations, recombinations, or lateral gene transfers. 

In the broadest sense, phylogenetic networks are directed acyclic graphs 
whose leaves are labelled by the organisms under study. 
This general definition, while allowing a wide range of biological processes to be considered, lacks mathematical tractability. For this reason, some other constraints must be considered, resulting in a wide variety of classes of phylogenetic networks (see \cite{kong2022classes} for a recent review, or 
  \cite[Chapter~10]{steel2016phylogeny}). In this work we focus on the class of \emph{orchard} networks~\cite{erdos2019class} (also called \emph{cherry-picking} networks \cite{Janssen2021}) and \emph{tree-child} networks~\cite{Cardona2009}, a subclass of the first and one of the most explored classes of networks. 

 Orchard networks are networks that can be reduced to a trivial network by iteratively identifying and reducing certain substructures (namely, cherries and reticulated cherries) involving two leaves. Orchard networks are one of those classes of networks with biological significance (according to \cite{kong2022classes}) because they can be viewed as a backbone tree with additional ``horizontal'' arcs (see \cite{van2022orchard} for more details).

One of the relevant problems in the study of phylogenetic networks is that of  their sequential generation; that is, obtaining a method to generate them in an efficient and unique way. 
Generation of phylogenetic networks is useful, for example, for testing the performance of methods in phylogenetics and for testing hypotheses about the evolutionary relationships among organisms by the comparison of different network topologies.

Up to our knowledge, there exists no prior work neither on the systematic generation of orchard networks nor on its counting.
Notice, however, that the identification of orchard networks  as trees with extra arcs used in \cite{van2022orchard}, obviously results in an algorithm to generate them, but not uniquely, and moreover there is no  prior control on the probability distribution of the generated networks. The situation for tree-child networks is slightly better, since there are previous works on the enumeration~\cite{fuchs2020counting, pons2021combinatorial} and generation~\cite{cardona2019generation, cardona2020counting} of this kind of networks, but much less efficiently than the method given here (see Section~\ref{sec:computational}).

In this paper, we shall focus on the problem of the effective and injective generation of orchard and tree-child phylogenetic networks; that is, no pair of generated networks will be the same (technically, isomorphic), and we can promptly get many networks with the number of leaves and reticulations that we want.
Our method of generation is based on the construction of sequences of pairs of integers that encode orchard (and, in particular, tree-child) networks as introduced in \cite{Janssen2021}. However, there are different sequences that generate the same network, so that we choose among them a \emph{minimum} one that uniquely represents it. Hence, our strategy to generate orchard (and tree-child) networks is based on the generation of those minimum sequences.

The paper is organized as follows. In Section~\ref{sec:preliminaries} we give basic definitions used throughout the manuscript. In Section~\ref{sec:orchard} we define orchard networks and how they can be reduced by means of reducible sequences. In Section~\ref{sec:min_red_seq} we show that we can choose a minimum (in a sense to be defined) reducible sequence in order to uniquely identify an orchard network up to isomorphism. 
Section~\ref{sec:augmentation} shows how the reduction of a pair can be reverted by means of augmentations, and in Section~\ref{sec:aug_seq_generation} it is used to describe how
to recover an orchard network by reversing the whole reduction process, and how this process, together with the unicity of the minimum reducible sequence, allows us to generate orchard networks injectively. In Section~\ref{sec:generation_tc} we adapt our methods to generate tree-child networks, which constitute a relevant subclass of orchard networks. In Section~\ref{sec:computational} we present the implementation we have made of the methods contained in this paper and exhibit some computational experiments we have performed, including the computation of the number of orchard networks with up to $6$ leaves and $8$ reticulations. Finally, Section~\ref{sec:conclusions} contains the conclusions of the manuscript and some possible directions of future work.

\section{Preliminaries}\label{sec:preliminaries}

Throughout the paper, for any positive integer $n$, we denote by $[n]$ the set $\{1,\dots,n\}$.

The graphs $N=(V,A)$ we shall work with are directed and acyclic. 
Given two nodes $u,v\in V$, if there is an arc with tail $u$ and head $v$ (or from $u$ to $v$), we denote it as $uv$. In that case, $u$ is a \emph{parent} of $v$ and $v$ is a \emph{child} of $u$.

Given a node $u\in V$,  $\indeg u$ (resp. $\outdeg u$) denotes the number of arcs whose head (resp. tail) is $u$. 
We say that  $u$  is \emph{elementary} if $\indeg u=\outdeg u=1$, and its \emph{simplification} consists in removing it (together with its incident arcs) and connecting its single parent to its single child.

Given a set $X$ of \emph{taxa}, a (rooted binary) \emph{phylogenetic network}, or simply a \emph{network}, on $X$,  is a directed acyclic graph $(V,A)$ without parallel arcs such that any node $u\in V$ is either: 
  \begin{enumerate}
        \item a \emph{root}, with $\indeg u=0$, $\outdeg u=1$ (and there can only be one root), or
        \item a \emph{leaf}, with $\indeg u=1$, $\outdeg u=0$, or
        \item a \emph{tree node}, with $\indeg u=1$, $\outdeg u=2$, or
        \item a \emph{reticulation}, with $\indeg u=2$, $\outdeg u=1$,
  \end{enumerate}
together with a fixed  bijection between $X$ and the set of leaves. 

We shall hereafter identify the set $X$ of taxa and the set of leaves, and we shall always assume that $X$ is formed by positive integers, and hence $X\subseteq [n]$ for some $n$.

Two networks $N$ and $N'$ are \emph{isomorphic} if there exists a bijection $\phi$ between the respective set of nodes that reflects and preserves the arcs (that is, $uv$ is an arc in $N$ if, and only if, $\phi(u)\phi(v)$ is an arc in $N'$), which is the identity on the leaves (that is, if $l$ is a leaf, $\phi(l)=l$). Hereafter, we shall simply say that two networks are equal if they are isomorphic.

In case that $X=\{l\}$, for some $l\in[n]$, we define the \emph{trivial} network on $\{l\}$, and denote it by $I_l$, as the network that has two nodes, the root and the leaf $l$, connected by an arc.

\section{Orchard networks}
\label{sec:orchard}

Let $N=(V,A)$ be a network on $X\subseteq[n]$ and let $(i,j)\in X\times X$ with $i\neq j$. Also, denote by $p_i, p_j$ the parents of the leaves $i$ and $j$ in $N$, respectively. We call $(i,j)$ a \emph{cherry} if $p_i=p_j$, and we call it a \emph{reticulated-cherry} if $p_i$ is a reticulation, $p_j$ is a tree node, and $p_j$ is one of the parents of $p_i$. In either case $(i,j)$ is a cherry or a reticulated-cherry, we say that $(i,j)$ is a \emph{reducible pair} in $N$. In order to identify which kind of reducible pair is $(i,j)$ in $N$, we will define its \emph{character} as $\chi_N(i,j)=\tC$ if it is a cherry and $\chi_N(i,j)=\tR$ if it is a reticulated-cherry. 
Notice that the conditions of being a cherry and a reticulated-cherry are clearly incompatible, which implies that $\chi_N$ is well defined.
If the network is clear from the context, we will simply write $\chi(i,j)$. To ease notations, if a pair $(i,j)$ has character $\chi=\chi_N(i,j)$, we shall write the \emph{annotated pair} as $(i,j)^\chi$.

Given a network $N$, we shall denote by $\RP(N)$ the set of reducible pairs of $N$, by $\chi_N$ the mapping $\RP(N)\to\{\tC,\tR\}$ that gives the character of the reducible pairs, and by $\ARP(N)$ the set of annotated reducible pairs of $N$. 

If $(i,j)\in\RP(N)$, the \emph{reduction} of $(i,j)$ in $N$, denoted by $N^{(i,j)}$, is the result of:
\begin{itemize}
    \item If $\chi(i,j)=\tC$, then 
    remove the leaf $i$ (and its incoming arc) and simplify $p_i$, which is now an elementary node.
     \item If $\chi(i,j)=\tR$, then remove the arc $p_jp_i$ and then simplify $p_i$ and $p_j$, which are now elementary nodes. 
\end{itemize}

Given a sequence of pairs of integers $S=(s_1,\dots,s_k)$ which, for brevity, we will write as $S=s_1\cdots s_k$, with $s_t=(i_t,j_t)$ and $i_t,j_t\in[n]$, of length $k\ge1$, we say that $S$ is \emph{reducible} in $N$ if:
\begin{itemize}
    \item $s_1$ is reducible in  $N$.
    \item For every $t\in\{ 2,\dots,k \}$, $s_t$ is reducible in
    $(\dots(N^{s_1})^{s_2}\dots)^{s_{t-1}}$.
\end{itemize}
In such a case, we shall define the \emph{reduction} of $N$ with respect to $S$ as $(\dots(N^{s_1})^{s_2}\dots)^{s_{k}}$ and it will be denoted by $N^S$.

Moreover, we say that $S$ is \emph{complete} if $N^S=I_l$ for some $l\in X$ and, in case one such complete sequence exists, we call $N$ an \emph{orchard network} \cite{erdos2019class, Janssen2021}. We shall also consider the trivial networks $I_l$ as orchard networks, corresponding to the case when the sequence $S$ is empty. Notice that trivial networks are the only ones that have a single leaf.

The fundamental result that allows one to classify orchard networks using complete reducible sequences is the following, which is adapted from~\cite[Corollary~1]{Janssen2021}.

\begin{thm}
    \label{thm:same-crs-implies-iso}
    Let $S$ be a complete reducible sequence for two orchard networks $N$ and $N'$. Then, $N\cong N'$. 
\end{thm}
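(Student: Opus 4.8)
The plan is to prove the statement by induction on the length $k$ of the complete reducible sequence $S = s_1 \cdots s_k$. The base case $k = 0$ is immediate: if $S$ is empty and complete for both $N$ and $N'$, then $N = I_l = N'$ for some $l \in X$, since only trivial networks admit the empty sequence as a complete sequence. For the inductive step, suppose the statement holds for all complete reducible sequences of length $k-1$, and let $S = s_1 \cdots s_k$ be complete for both $N$ and $N'$. By definition, $s_1 = (i,j)$ is reducible in $N$ and in $N'$, and the tail sequence $S' = s_2 \cdots s_k$ is complete for both $N^{s_1}$ and $N'^{s_1}$. By the inductive hypothesis, $N^{s_1} \cong N'^{s_1}$.

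The remaining task is to show that $N$ and $N'$ can be reconstructed from $M := N^{s_1} \cong N'^{s_1}$ together with the pair $s_1 = (i,j)$, in such a way that the outcome is unique up to isomorphism. The key observation is that the reduction of a pair is a local operation near the leaves $i$ and $j$, and it is invertible: knowing $M$, the pair $(i,j)$, and which \emph{kind} of reduction was performed (cherry or reticulated-cherry), one can rebuild the network uniquely. Concretely, if $s_1$ was a cherry in $N$, then in $M$ the leaf $i$ is absent and $j$ hangs below some node; the reconstruction attaches a new tree node on the arc above $j$ and hangs a new leaf $i$ from it. If $s_1$ was a reticulated-cherry in $N$, then in $M$ one subdivides the arc above $i$ and the arc above $j$ with new nodes $p_i$ (a reticulation) and $p_j$ (a tree node) and adds the arc $p_j p_i$. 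In either case the construction depends only on the isomorphism class of $M$ and on the data $(i,j)$ together with the character; since an isomorphism $M \cong M'$ restricting to the identity on leaves carries one reconstruction to the other, the result is well defined up to isomorphism.

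The main obstacle — and the point requiring care — is showing that the character of $s_1$ is forced: that is, $s_1$ cannot be a cherry in $N$ while being a reticulated-cherry in $N'$, for then the two reconstructions of the common network $M$ would differ. One way around this is to observe that the character is determined by $M$ and $(i,j)$ alone. After reducing a cherry $(i,j)$, the leaf $i$ is \emph{gone} from $M$, so $i \notin X(M)$; after reducing a reticulated-cherry $(i,j)$, the leaf $i$ \emph{remains} in $M$. Hence whether $i$ is still a leaf of $M$ tells us which character $s_1$ had, and this is an invariant of the isomorphism class of $M$. Since $N^{s_1}$ and $N'^{s_1}$ are isomorphic, $s_1$ has the same character in $N$ and in $N'$, the two reconstructions coincide up to isomorphism, and therefore $N \cong N'$. (Alternatively, one can invoke the augmentation machinery developed in Section~\ref{sec:augmentation}, which formalizes exactly this inverse operation and its well-definedness; the inductive argument above is then just the iteration of that single-step statement.)
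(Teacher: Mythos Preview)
The paper does not actually prove this theorem: it is stated as being adapted from \cite[Corollary~1]{Janssen2021} and is invoked without further argument. Your inductive proof is correct and self-contained, and in fact it anticipates the augmentation construction of Section~\ref{sec:augmentation}: the crucial observation that the character of $s_1=(i,j)$ in $N$ is recoverable from $N^{s_1}$ alone (namely, $\chi_N(i,j)=\tC$ if and only if $i$ is no longer a leaf of $N^{s_1}$), combined with the fact that augmentation inverts reduction and is well defined on isomorphism classes, is exactly what makes the induction go through. This is precisely the content behind the paper's later remark that $\aug{(i,j)}{\big(N^{(i,j)}\big)}\cong N$, so your argument and the paper's machinery are two sides of the same coin.

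One small caveat on the base case: when $k=0$ you assert $N=I_l=N'$, but the empty sequence is complete for every trivial network $I_l$, so without the (natural, but unstated) assumption that $N$ and $N'$ share the same leaf set this fails. This is harmless in context, since for $k\ge 1$ the final surviving leaf is forced to be the second coordinate of $s_k$, and the paper never applies the theorem to trivial networks.
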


Notice, however, that the complete reducible sequence for an orchard network is not unique. For instance, Fig.~\ref{fig:reduct} shows an orchard network $N$ together with the networks that are obtained by application of the reductions in the sequence 
  $S=(3,1)(3,2)(1,2)(3,4)(2,4)$, but it is easy to check that $S'= (3,1)(3,4)(2,3)(1,3)(3,4)$ is another complete reducible sequence for $N$.

\begin{figure}[!ht]%
\centering
\includegraphics[width=1\textwidth]{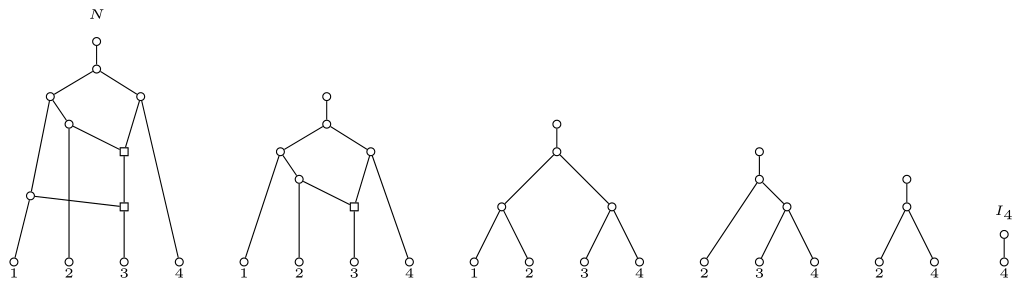}
\caption{ An orchard network $N$ and the set of intermediate networks obtained by (cherry or reticulated-cherry) reductions until reaching $N^S=I_4$, the reduction of $N$ with respect to the complete sequence $S=(3,1)(3,2)(1,2)(3,4)(2,4)$. For instance, the second network is $N^{(3,1)}$, the result of the reduction of (the reticulated-cherry) $(3,1)$ in $N$.}\label{fig:reduct}
\end{figure}

\section{Minimum reducible sequences}
\label{sec:min_red_seq}

As observed before, there may exist different complete reducible sequences for a given orchard network. Our goal in this section is to define a unique representative among all sequences giving the same network.

Let $(i,j)$, $(i',j')$ be two pairs of different integers. We say that $(i,j)\le (i',j')$ if $i<i'$ or $i=i'$ and $j\le j'$. If $(i,j)\le(i',j')$ and $(i,j)\neq(i',j')$, we simply write $(i,j)<(i',j')$.

Given two sequences of pairs of integers of the same length, $S=s_1\cdots s_k$ and $S'=s'_1\cdots s'_k$, we say that $S< S'$ if, for some $l\in[k]$ we have that $s_1=s_1',\dots,s_{l-1}=s'_{l-1}$ and $s_l<s'_{l}$.

It is easy to check that the relations just defined are total orders (on pairs and sequences of pairs of fixed length, respectively).

Given a non trivial orchard network $N$, consider the set $\RP(N)$ of reducible pairs of $N$. We define the \emph{minimum reducible pair} of $N$, $\MRP(N)$, as the minimum (with respect to the ordering just defined) pair in $\RP(N)$. Also, we denote by $\CRS(N)$ the set of complete reducible sequences of $N$, and we define the \emph{minimum complete reducible sequence} of $N$, $\MCRS(N)$, as the minimum (with respect to the ordering just defined) of $\CRS(N)$.  

Following the example of the two complete reducible sequences $S=(3,1)(3,2)(1,2)(3,4)(2,4)$ and 
$S'=(3, 1)(3, 4)(2, 3)(1, 3)(3, 4)$
for the orchard network $N$ depicted in Fig.~\ref{fig:reduct}, notice that since $(3,2)<(3,4)$ then $S<S'$. In fact, it can be checked that $\MCRS(N)=S$.

Notice that all the complete reducible sequences of a given orchard network have the same length, since this length is equal to $|X|+r-1$, where $r$ is the number of reticulations of $N$. 
We show that the two minimums just defined are related.

\begin{prop}\label{thm:mcrs_vs_mrp}
    Let $N$ be a non trivial orchard network. Then, the first pair in $\MCRS(N)$ is $\MRP(N)$.
\end{prop}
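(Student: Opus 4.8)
The plan is to show that $\MRP(N)$ can always be chosen as the first reduction of some complete reducible sequence; since $\MCRS(N)$ is by definition the lexicographically smallest complete reducible sequence, and its first pair is necessarily a reducible pair of $N$ (hence at least $\MRP(N)$), it will follow that the first pair of $\MCRS(N)$ is exactly $\MRP(N)$. So the crux is: if $(i,j)=\MRP(N)$, then there exists a complete reducible sequence of $N$ beginning with $(i,j)$. Equivalently, $N^{(i,j)}$ is again an orchard network, because then any complete reducible sequence of $N^{(i,j)}$, prepended with $(i,j)$, is a complete reducible sequence of $N$.

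Thus the main work is to prove that reducing a single reducible pair in an orchard network yields an orchard network. For this I would argue as follows. Let $T$ be any complete reducible sequence of $N$, so $N^T = I_l$. I want to transform $T$ into a complete reducible sequence of $N$ that starts with $(i,j)$; reducing that first pair then exhibits a complete sequence for $N^{(i,j)}$. The key technical tool is a local commutation/exchange lemma for reductions: if $(i,j)$ is reducible in $N$ and $s$ is the first pair of $T$ with $s \neq (i,j)$, then performing $(i,j)$ first and then $s$ (suitably, perhaps $s$ itself or a slightly modified pair) reaches the same network as performing $s$ then $(i,j)$, and both orders remain legal. Iterating this ``bubble $(i,j)$ to the front'' argument through $T$ produces a complete reducible sequence of $N$ whose first entry is $(i,j)$. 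One must be careful that reducing $(i,j)$ does not destroy the reducibility of later pairs and vice versa: the reductions of a cherry or reticulated-cherry are very local operations (they only touch $p_i$, $p_j$ and their incident arcs), so two reducible pairs that involve disjoint parent-sets clearly commute, and the only delicate cases are when $(i,j)$ and $s$ share a leaf or a parent. Those finitely many overlap configurations should be checked by hand, and in each the order of reduction can be swapped, possibly after replacing $s$ by the pair it ``becomes'' after reducing $(i,j)$.

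The main obstacle I anticipate is precisely the case analysis of overlapping reducible pairs in the exchange lemma: when $(i,j)$ is a reticulated cherry and some later pair $s$ reuses the tree-node parent $p_j$ or the reticulation $p_i$, reducing $(i,j)$ changes the local structure, and one has to verify that $s$ (or its image) is still reducible afterward and that the end result is unchanged. A cleaner alternative that avoids the explicit casework would be to invoke Theorem~\ref{thm:same-crs-implies-iso}: it suffices to produce \emph{some} orchard network $N'$ together with a reducible pair so that reducing it gives $N$ and such that $(i,j)$-reduction of $N$ is witnessed as orchard by some known closure property; but the most direct route is the exchange argument. Once $N^{(i,j)}$ is known to be orchard, the proof concludes in one line: prepend $(i,j)$ to $\MCRS(N^{(i,j)})$ to get a complete reducible sequence of $N$ starting with $\MRP(N)$, and since no complete reducible sequence of $N$ can start with a pair smaller than $\MRP(N)$ (its first entry lies in $\RP(N)$), the first pair of $\MCRS(N)$ is $\MRP(N)$.
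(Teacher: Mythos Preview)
Your overall structure matches the paper's exactly: the first pair of $\MCRS(N)$ lies in $\RP(N)$ and hence is $\ge\MRP(N)$; conversely, exhibiting any complete reducible sequence beginning with $\MRP(N)$ gives the reverse inequality, and the two together force equality. The only difference is in how the existence of such a sequence is justified. The paper does not argue this point itself: it simply invokes a known result (Proposition~4.1 of Erd\H{o}s, Semple and Steel, 2019) stating that any reducible pair in an orchard network extends to a complete reducible sequence---equivalently, that reducing a reducible pair in an orchard network again yields an orchard network. You instead propose to prove this closure property directly, by a commutation argument that bubbles $(i,j)$ to the front of an existing complete sequence. That route is viable, and your identification of the delicate overlap cases (pairs sharing a leaf or a parent, with possible relabelling of $s$ after the swap) is accurate; but the case analysis is not carried out. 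So as written your argument, like the paper's, rests on one black box---yours is an unfinished exchange lemma, theirs a citation to the literature.
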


\begin{proof}
    Let $s$ be the first pair in $S=\MCRS(N)$ and $s'=\MRP(N)$. Obviously, $s\in\RP(N)$ and, from the definition of $\MRP(N)$, it follows that $s'\le s$.
    Due to \cite[Proposition~4.1]{erdos2019class}, the sequence with the single pair $s'$ can be extended to give a complete sequence $S'\in\CRS(N)$. Since the minimum complete sequence is $S$, we have that $S\le S'$, and hence $s\le s'$. Therefore, $s=s'$ and the result is proved.
\end{proof}

We define $\cS(X,r)$ as the set whose elements are the sequences $\MCRS(N)$ for every orchard network $N$ over $X$ with exactly $r$ reticulations.

\begin{thm}
\label{thm:bijection}
    There is a bijection between $\cS(X, r)$ and the set of orchard networks over $X$ with exactly $r$ reticulations.
\end{thm}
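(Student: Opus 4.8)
The plan is to show that the assignment $\Phi\colon N\mapsto \MCRS(N)$ is a well-defined bijection from the set of orchard networks over $X$ with exactly $r$ reticulations onto $\cS(X,r)$.

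First I would check that $\Phi$ is well-defined. Since $N$ is orchard, $\CRS(N)\neq\emptyset$. A reduction of a pair deletes a leaf or an internal arc but never creates a new leaf, so every pair appearing in a reducible sequence for $N$ belongs to $X\times X$; moreover, as noted just before Proposition~\ref{thm:mcrs_vs_mrp}, all sequences in $\CRS(N)$ have the common length $|X|+r-1$. Hence $\CRS(N)$ is a finite nonempty subset of $(X\times X)^{|X|+r-1}$, and since $<$ restricts to a total order on sequences of that fixed length, the minimum $\MCRS(N)$ exists; by the definition of $\cS(X,r)$ it lies in $\cS(X,r)$. So $\Phi$ is a well-defined map into $\cS(X,r)$.

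Surjectivity of $\Phi$ is then immediate: by construction, every element of $\cS(X,r)$ equals $\MCRS(N)$ for some orchard network $N$ over $X$ with $r$ reticulations. The main step is injectivity. Suppose $N$ and $N'$ are orchard networks over $X$ with $r$ reticulations and $\MCRS(N)=\MCRS(N')=:S$. Then $S$ is simultaneously a complete reducible sequence for $N$ and for $N'$, so Theorem~\ref{thm:same-crs-implies-iso} yields $N\cong N'$; since we identify isomorphic networks, $N=N'$. Combining the three observations, $\Phi$ is a bijection.

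The genuine mathematical content is entirely carried by the imported Theorem~\ref{thm:same-crs-implies-iso}; the only delicate point is the well-definedness of the minimum, i.e.\ verifying that $\CRS(N)$ is finite so that minimising over it is legitimate, and this is the step I would write most carefully. I would also note in passing that the length of any $S\in\cS(X,r)$ equals $|X|+r-1$, so $r$ is recoverable from $S$ and the sets $\cS(X,r)$ are pairwise disjoint for distinct $r$; this is not needed for the statement but tidies up the correspondence between the two families.
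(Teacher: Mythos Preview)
Your argument is correct and follows exactly the same route as the paper: injectivity via Theorem~\ref{thm:same-crs-implies-iso}, surjectivity by the very definition of $\cS(X,r)$, and well-definedness by uniqueness of $\MCRS(N)$. The paper's proof is a single sentence invoking Theorem~\ref{thm:same-crs-implies-iso} and unicity of $\MCRS(N)$; you have simply unpacked these points (particularly the finiteness of $\CRS(N)$ needed for the minimum to exist) more carefully than the paper does.
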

\begin{proof}
 The result follows from Theorem \ref{thm:same-crs-implies-iso} and the unicity of $\MCRS(N)$. 
\end{proof}


\section{Augmentation of networks}
\label{sec:augmentation}

In this section, we present an augmentation construction, which is the inverse of the reduction defined before, and show how we can determine the ARP of the obtained network from that of the original network.

Throughout this section we consider that $N$ is a network on $X\subseteq [n]$ and $(i,j)\in [n]\times [n]$ is a pair of integers with $i\neq j$ and $j\in X$.

We define the \emph{augmentation} of $(i,j)$ in $N$, denoted by $\aug{(i,j)}N$, as the result of:
\begin{itemize}
\item if $i \notin X$, create a new (leaf) node $i$, subdivide the arc ending in $j$ creating an elementary node $p_j$, and add the arc $p_ji$.  
\item if $i \in X$, subdivide both arcs ending in $i$ and $j$ creating elementary nodes $p_i$ and $p_j$, and add an arc $p_jp_i$. 
\end{itemize}

Similarly as in the reduction case, we shall define the \emph{augmentation} of an orchard network $N$ (which could be a trivial network $I_l$) with respect to a sequence $S=s_1\cdots s_k$ as $\prescript{s_1}{}( \cdots(^{s_{k-1}}(^{s_k}N)))$  and it will be denoted by $\aug SN$.

Notice that $(i,j)$ is a cherry in $\aug{(i,j)}N$, in symbols $\chi(i,j)=\tC$, when $i\notin X$, and $(i,j)$ is a reticulated cherry in $\aug{(i,j)}N$, in symbols $\chi(i,j)=\tR$, when $i\in X$. Then, the augmenting operation \emph{is} the inverse of the reduction operation, in the sense that $\aug{(i,j)}{\big(N^{(i,j)}\big)}\cong N$. This leads to present an alternative definition for orchard networks as those can be obtained by an augmentation of a trivial network $I_l$. 

Note also that if $N=\aug{S}{I_l}$ (for some $l\in[n]$), then necessarily the last pair in $S$ must be $(i,l)$ (for some $i\in[n]$). Hence, $l$ is determined by $S$ and can be omitted from $I_l$. Therefore, from now on we will simply write $N=\aug SI$.

We describe now how one can compute $\ARP(\aug{(i,j)}N)$ from $\ARP(N)$. That is, we show how the cherries and reticulated cherries of $N'=\aug{(i,j)}N$ can be found from the knowledge of those of $N$. Some remarks are due. 

\begin{enumerate}
    \item It is clear that the augmentation is a local operation; more precisely, a cherry (resp. reticulated cherry) in $N$ that is disjoint from $(i,j)$ keeps being a cherry (resp. reticulated cherry) in $N'$. 
    \item One only needs to check if the augmentation operation makes that some reducible pair disappears or changes its character (passes from cherry to reticulated cherry or viceversa), and if some new reducible pair appears. As for this last case, notice that the only reducible pair that can appear is $(i,j)$.
\end{enumerate}

Hence, we shall take any pair $(x,y)$ and decide if it is a reducible pair in $N'$ (that is, whether or not $(x,y)\in\RP(N')$) and, in such a case, if either $(x,y)^\tC\in\ARP(N')$ or $(x,y)^\tR\in\ARP(N')$ (equivalently, the value of $\chi_{N'}(x,y)$):
\begin{itemize}
  \item Case $\{x,y\}=\{i,j\}$:
  \begin{itemize}
    \item Case $i\notin X$: Both $(i,j)$ and $(j,i)$ are cherries in $N'$ and hence $(i,j)^\tC,(j,i)^\tC\in \ARP(N')$.
    \item Case $i\in X$: Now $(i,j)$ is a reticulated cherry and hence $(i,j)^\tR\in\ARP(N')$, but $(j,i)\notin\RP(N')$.
  \end{itemize}
\end{itemize}
  Note that from now on we can restrict ourselves to pairs $(x,y)$ in $\RP(N)$, since no other new pairs can appear.
\begin{itemize}
  \item Case $\{x,y\}\cap\{i,j\}=\emptyset$: From the local character of augmentation, $(x,y)\in\RP(N')$ and
  $\chi_{N'}(x,y)=\chi_N(x,y)$.
  \item Case $x=i, y\neq j$ (see Fig.~\ref{fig:casex_eq_i}): If $(x,y)$ is a cherry in $N$, say that $p$ is their common parent, then in $N'$ the arc $pi=px$ is split introducing a node which will be a reticulation; hence, $(x,y)$ is a reticulated cherry in $N'$. If $(x,y)$ is a reticulated cherry in $N$, then the parent of $y$ will no longer be a grandparent of $x=i$ in $N'$ (since the arc leading to $i$ is split in two). In brief, $(x,y)^\tR\in \ARP(N')$ if $(x,y)^\tC\in\ARP(N)$, and $(x,y)\notin \RP(N')$ otherwise.

\begin{figure}[!ht]%
\centering
\includegraphics[width=0.5\textwidth]{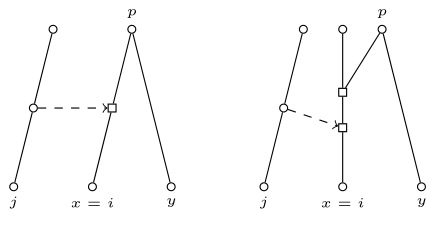}
\caption{Support picture for the case $x=i$ and $y\neq j$.  In the left, $(x,y)$ is a cherry in $N$. In the right, $(x,y)$ is a reticulated cherry in $N$.  The dashed arrow indicates the added arc to transform $N$ into $N'$ by the augmentation operation. Arcs whose tips are not explicitly drawn go from top to bottom.}\label{fig:casex_eq_i}
\end{figure}

  \item Case $x\neq j, y=i$ (see Fig.~\ref{fig:casey_eq_i}): The same argument as in the previous case gives that if $(x,y)$ is a cherry in $N$, then $(y,x)$ (notice the transposition) is a reticulated cherry in $N'$ (and hence $(x,y)\notin\RP(N')$).  Note that, if $(x,y)$ is a cherry of $N$, so is $(y,x)$, and hence the fact that $(y,x)^\tR$ belongs to $\ARP(N')$ will be covered by the application to the previous case applied to $(y,x)$. As before, if $(x,y)$ is a reticulated cherry of $N$, then it is no longer reducible in $N'$. 
  Therefore, in either case we have that $(x,y)\notin\RP(N')$.

\begin{figure}[!ht]%
\centering
\includegraphics[width=0.5\textwidth]{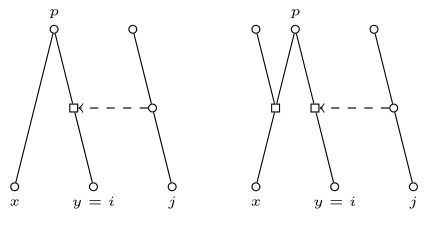}
\caption{Support picture for the case $x \neq j$ and $y=i$.}\label{fig:casey_eq_i}
\end{figure}

  \item Case $x\neq i, y=j$ (see Fig.~\ref{fig:casey_eq_j}): Let $py=pj$ be the arc leading to $y=j$; this arc is split in $N'$ by introducing a node that will be a tree node; this implies that $(x,y)$ will no longer be reducible in $N'$ and hence $(x,y)\notin \RP(N')$.

\begin{figure}[!ht]%
\centering
\includegraphics[width=0.5\textwidth]{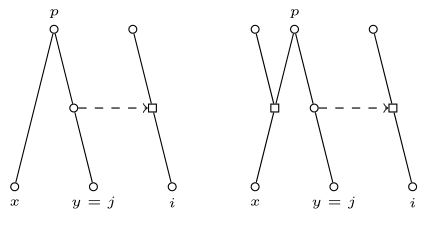}
\caption{Support picture for the case $x \neq i$ and $y= j$.} \label{fig:casey_eq_j}
\end{figure}

  \item Case $x=j, y\neq i$: The same argument as in the previous case, taking now the arc leading to $x=j$ implies that $(x,y)\notin \RP(N')$.
\end{itemize}

We can summarize these computations in the following result.

\begin{thm}\label{thm:augmentation_ARP}
    Let $N$ be an orchard network on $X\subseteq[n]$, and $(i,j)\in[n]\times[n]$ a pair with $i\neq j$ and $j\in X$.
    Consider
    the set of pairs $(x,y)\in[n]\times[n]$ such that one of the following conditions hold:
    \begin{enumerate}
        \item $(x,y)\in\RP(N)$, and $\{x,y\}\cap\{i,j\}=\emptyset$,
        \item $(x,y)\in\RP(N)$, $i=x$, $j\neq y$, and $\chi(x,y)=\tC$,
        \item $(x,y)=(i,j)$,
        \item $(x,y)=(j,i)$, and $i\notin X$.
    \end{enumerate}
    Annotate these pairs with the character $\chi'(x,y)$ given, in each case, by:
    \begin{enumerate}
        \item $\chi'(x,y)=\chi(x,y)$,
        \item $\chi'(x,y)=\tR$,
        \item $\chi'(x,y)=\tC$ if $i\notin X$, and $\chi'(x,y)=\tR$ if $i\in X$,
        \item $\chi'(x,y)=\tC$.
    \end{enumerate}
    Denote by $\aug{(i,j)}{\ARP(N)}$ the set of annotated pairs that is obtained by application of the procedure above. Then, $\aug{(i,j)}{\ARP(N)}=\ARP(\aug{(i,j)}N)$.
\end{thm}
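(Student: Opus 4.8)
The plan is to prove the set equality $\aug{(i,j)}{\ARP(N)}=\ARP(\aug{(i,j)}N)$ by two inclusions, both of which are essentially bookkeeping over the case analysis carried out in the discussion preceding the statement. Write $N'=\aug{(i,j)}N$ throughout. Since $\ARP$ of a network is just the graph of the character map $\chi_{N'}$ on the reducible pairs, it suffices to show that a pair $(x,y)\in[n]\times[n]$ with $x\neq y$ lies in $\RP(N')$ if and only if it satisfies one of the four conditions (1)--(4) in the statement, and that in that case $\chi_{N'}(x,y)$ equals the announced $\chi'(x,y)$.

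First I would establish the ``only if'' direction: suppose $(x,y)\in\RP(N')$. As observed in remark (2) before the statement, the only reducible pair of $N'$ whose underlying set of leaves is not already a reducible pair of $N$ is $(i,j)$ itself (and $(j,i)$ when $i\notin X$), because augmentation is local and can create at most one new reducible pair. So either $(x,y)$ is one of these new pairs --- giving cases (3) and (4), with the character read off from the explicit description of $\aug{(i,j)}N$ (a cherry when $i\notin X$, a reticulated cherry when $i\in X$, matching item (3); a cherry $(j,i)$ when $i\notin X$, matching item (4)) --- or else $(x,y)\in\RP(N)$, and we must locate it in the partition of $\RP(N)$ according to how $\{x,y\}$ meets $\{i,j\}$. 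Walking through that partition exactly as in the bulleted discussion: if $\{x,y\}\cap\{i,j\}=\emptyset$ then $(x,y)\in\RP(N')$ with unchanged character (case (1)); if $x=i,y\neq j$ then $(x,y)\in\RP(N')$ precisely when $\chi_N(x,y)=\tC$, and then $\chi_{N'}(x,y)=\tR$ (case (2)); and in each of the remaining subcases $x\neq j,y=i$, or $x\neq i,y=j$, or $x=j,y\neq i$ the discussion shows $(x,y)\notin\RP(N')$, so these contribute nothing. Hence every element of $\RP(N')$ is accounted for by (1)--(4) with the stated character.

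For the ``if'' direction I would simply check that each pair produced by conditions (1)--(4) genuinely lies in $\RP(N')$ with the asserted character --- but this is exactly the content of the same bulleted case analysis read in the forward direction, since in each of those cases the argument given (splitting the relevant arc, tracking whether the resulting new node is a tree node or a reticulation, and whether a grandparent relation survives) produces an honest cherry or reticulated cherry in $N'$. One small point worth spelling out is that cases (3) and (4) cannot collide with cases (1) and (2): case (1) excludes pairs meeting $\{i,j\}$, and case (2) requires $x=i$ with $\chi_N(x,y)=\tC$ where $y\neq j$, whereas $(i,j)$ has $y=j$ and $(j,i)$ has $x=j\neq i$ (recall $j\in X$ always, while in case (4) $i\notin X$, so $i\neq j$ is automatic and the pairs are distinct); thus the four families are disjoint and $\chi'$ is well defined.

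The main obstacle is not any single deep step but rather making sure the case analysis is genuinely exhaustive and that the boundary/degenerate situations are handled: in particular that $j\in X$ always holds (so $p_j$ exists and the arc into $j$ can be subdivided), that when $i\in X$ the two leaves $i,j$ are distinct leaves of $N$ so subdividing ``both arcs ending in $i$ and $j$'' makes sense, and that a pair of the form $(y,x)$ with $(x,y)$ a cherry of $N$ is not double-counted --- it reappears as an instance of case (2) applied to $(y,x)$, exactly as noted in the discussion of the case $x\neq j,y=i$. Once these consistency checks are in place, the theorem follows by combining the two inclusions.
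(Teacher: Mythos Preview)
Your proposal is correct and follows essentially the same approach as the paper: the theorem there is stated as a summary of the preceding bulleted case analysis, and your proof simply repackages that analysis as the two inclusions needed for a set equality, explicitly invoking each of the cases $\{x,y\}=\{i,j\}$, $\{x,y\}\cap\{i,j\}=\emptyset$, $x=i,y\neq j$, $x\neq j,y=i$, $x\neq i,y=j$, and $x=j,y\neq i$. The only additions are organizational (the disjointness check for cases (1)--(4) and the remark about $(j,i)$ as a second new pair when $i\notin X$), which are consistent with and implicit in the paper's discussion.
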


As a result of the last theorem, in order to compute the set of annotated reducible pairs of the augmentation of a network, it is enough to traverse the annotated reducible pairs of the network. 

The next proposition shows that, given $S=\MCRS(N)$, it can be checked if $(i,j)S=\MCRS\big(\expair{i}{j}N\big)$ using only the information in $\ARP(N)$, without the need for knowing $N$ itself.

\begin{prop}
\label{prop:augmenting-condition}
 Let $S=\MCRS(N)$. Then, $(i,j)S=\MCRS\big(\aug{(i,j)}N\big)$ if and only if $(i,j)=\MRP\big(\aug{(i,j)}N\big)$.
\end{prop}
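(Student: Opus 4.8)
The plan is to prove the two implications separately, exploiting Proposition~\ref{thm:mcrs_vs_mrp} for the ``only if'' direction and a cut-and-paste argument on complete reducible sequences for the ``if'' direction.

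For the forward implication, suppose $(i,j)S = \MCRS\big(\aug{(i,j)}N\big)$. By Proposition~\ref{thm:mcrs_vs_mrp} applied to the (non-trivial) orchard network $\aug{(i,j)}N$, the first pair of its minimum complete reducible sequence is exactly $\MRP\big(\aug{(i,j)}N\big)$; but that first pair is $(i,j)$ by hypothesis, so $(i,j)=\MRP\big(\aug{(i,j)}N\big)$. This direction is essentially immediate.

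For the converse, assume $(i,j)=\MRP\big(\aug{(i,j)}N\big)$; write $N' = \aug{(i,j)}N$. First I would observe that $(i,j)S$ is a complete reducible sequence for $N'$: indeed $(i,j)$ is reducible in $N'$ (it is a cherry if $i\notin X$ and a reticulated cherry if $i\in X$, as noted after the definition of augmentation), and $N'^{(i,j)} \cong N^{(i,j)}$-style reasoning — more precisely, since augmentation is the inverse of reduction, $(N')^{(i,j)} \cong N$, so after removing the first pair we are reducing $N$ by $S=\MCRS(N)$, which is complete. Hence $(i,j)S\in\CRS(N')$, and therefore $\MCRS(N')\le (i,j)S$. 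Now by Proposition~\ref{thm:mcrs_vs_mrp}, the first pair of $\MCRS(N')$ equals $\MRP(N') = (i,j)$ by assumption. So $\MCRS(N')$ has the form $(i,j)T$ for some sequence $T$, and comparing with $(i,j)S$ in the lexicographic order on sequences we get $T \le S$. The key step is then to show $T$ is a complete reducible sequence for $N = (N')^{(i,j)}$: this follows because $\MCRS(N') = (i,j)T$ being complete for $N'$ means $(i,j)$ is reducible in $N'$ and $T$ is complete reducible in $(N')^{(i,j)}\cong N$. Since $S = \MCRS(N)$ is the \emph{minimum} complete reducible sequence of $N$, we get $S \le T$. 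Combined with $T\le S$ this forces $T = S$, hence $\MCRS(N') = (i,j)S$, as desired.

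The main obstacle I anticipate is the careful bookkeeping around the isomorphism $(\aug{(i,j)}N)^{(i,j)} \cong N$ and the fact that complete reducible sequences and the orders on them are insensitive to isomorphism — one must check that $\CRS$, $\MCRS$, and $\MRP$ all transport correctly along the isomorphism so that ``$T$ complete reducible for $(N')^{(i,j)}$'' really is the same as ``$T$ complete reducible for $N$''. This is routine given Theorem~\ref{thm:same-crs-implies-iso} and the observation (made in Section~\ref{sec:augmentation}) that augmentation inverts reduction, but it is the place where the argument could go wrong if stated loosely.
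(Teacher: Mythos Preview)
Your proof is correct and follows the same approach as the paper, which simply states ``It is a direct consequence of Proposition~\ref{thm:mcrs_vs_mrp}.'' Your argument makes explicit the cut-and-paste reasoning for the ``if'' direction (writing $\MCRS(N')=(i,j)T$, showing $T\in\CRS(N)$, and concluding $T=S$ by mutual inequality) that the paper leaves to the reader; this is exactly the work hidden behind the paper's one-line citation.
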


\begin{proof}
It is a direct consequence of Proposition~\ref{thm:mcrs_vs_mrp}.
\end{proof}


\section{Augmentation sequences and generation of orchard networks}
\label{sec:aug_seq_generation}

The goal of this section is to present an algorithm to generate the set of orchard networks over a set $[n]$ with exactly $r$ reticulations. Thanks to  Theorem \ref{thm:bijection}, this is equivalent to compute $\cS([n], r)$. Our strategy is to build these sequences starting with sequences of length one and, step by step, finding all possible pairs that can be prepended in order to get the sequences in $\cS([n], r)$.

Let $S=s_1\cdots s_k$ be a sequence of pairs of integers, say $s_t=(i_t,j_t)$ where $t=1,\dots,k$. We call the \emph{support} of $S$ the set $\Supp(S)=\{i_1,j_1,\dots,i_k,j_k\}$.
	For every $t=1,\dots,k$, we denote by $S_t$ the \emph{suffix} $s_t\cdots s_k$. 
	We say that a sequence $S$ as above is an \emph{augmentation sequence} if for each $t=1,\dots,k-1$, we have that $j_t\in \Supp(S_{t+1})$ and $i_t\neq j_t$. We remark that, although the formulation is not exactly the same, what we call augmentation sequences corresponds to \emph{cherry-picking sequences} in \cite[Definition~6]{Janssen2021}.
 
	It is clear that given an augmentation sequence $S$, we can consider the orchard network $N=\aug{S}I$, and also that $S$ will be a complete reducible sequence for $N$. From now on, all properties that can be defined for networks (taxa, number of reticulations, \dots) will be defined for augmentation sequences by applying them on the network that the sequence generates. For instance, we can define $\MRP(S):=\MRP(\aug{S}{I})$ and $\MCRS(S):=\MCRS(\aug SI)$.
	Note also that some of the properties can be found without having to construct the network itself.  For instance, the number of reticulations of $S=(i_1,j_1)\cdots(i_k,j_k)$, which by definition is the number of reticulations of $\aug SI$ can be found counting for how many indices $t=1,\dots,k-1$ we have that $i_t\in\Supp(S_{t+1})$. Also, using Theorem~\ref{thm:augmentation_ARP} recursively, we can compute $\ARP(S)$.

	We shall say that an augmentation sequence $S$ is a \emph{minimum} augmentation sequence if $S=\MCRS(N)$ for some network $N$. It is clear that it happens exactly when $S=\MCRS(S)$, and recall that $\MCRS(S)=\MCRS(\aug SI)$. This provides an alternative definition for $\cS(X,r)$ as the set of augmentation sequences that are stable under application of $\MCRS$, with support $X$ and with $r$ reticulations.

  These notations allow us to translate many properties that have been stated in terms of orchard networks into the language of sequences. For instance, Propositions~\ref{thm:mcrs_vs_mrp} and~\ref{prop:augmenting-condition} can be rewritten as follows.

\begin{prop}\label{thm:mcrs_vs_mrp_bis}
  Let $S$ be a minimum augmentation sequence.
  Then: 
  \begin{enumerate}
    \item The first pair in $S$ is $\MRP(S)$.
    \item Given a pair $(i,j)$ with $j\in\Supp(S)$, $(i,j)S$ is a minimum augmentation sequence if, and only if, $(i,j)=\MRP((i,j)S)$.
  \end{enumerate}
\end{prop}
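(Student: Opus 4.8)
The statement asserts that Propositions~\ref{thm:mcrs_vs_mrp} and~\ref{prop:augmenting-condition} admit a direct translation into the language of augmentation sequences. My plan is to argue that both items follow by ``unwinding'' the dictionary between minimum augmentation sequences and their associated networks, which has already been set up in the paragraphs preceding the statement. Concretely, I would first recall that a minimum augmentation sequence $S$ is, by definition, one for which $S=\MCRS(N)$ for some orchard network $N$, and that this $N$ may be taken to be $\aug SI$; moreover, by construction $\aug SI$ is non-trivial precisely when $S$ is non-empty, so Proposition~\ref{thm:mcrs_vs_mrp} applies to $N=\aug SI$.

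For item (1), I would simply apply Proposition~\ref{thm:mcrs_vs_mrp} to $N=\aug SI$: since $S=\MCRS(N)$, its first pair is $\MRP(N)$, and by the definition $\MRP(S):=\MRP(\aug SI)$ this is exactly $\MRP(S)$. For item (2), the forward implication is the key content: I would set $N=\aug SI$, so that $S=\MCRS(N)$, and observe that $\aug{(i,j)}N = \aug{(i,j)S}I$ by the recursive definition of augmentation with respect to a sequence. Then Proposition~\ref{prop:augmenting-condition} gives that $(i,j)S = \MCRS(\aug{(i,j)}N)$ if and only if $(i,j) = \MRP(\aug{(i,j)}N) = \MRP((i,j)S)$ (using $\MRP((i,j)S):=\MRP(\aug{(i,j)S}I)$). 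If this equivalent condition holds, then $(i,j)S$ equals $\MCRS$ of some network, hence is a minimum augmentation sequence; conversely, if $(i,j)S$ is a minimum augmentation sequence, then $(i,j)S = \MCRS(\aug{(i,j)S}I) = \MCRS(\aug{(i,j)}N)$, so the equivalent condition $(i,j)=\MRP((i,j)S)$ holds.

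I should also check that $(i,j)S$ is a genuine augmentation sequence whenever $S$ is and $j\in\Supp(S)$ with $i\neq j$ — this is immediate from the definition of augmentation sequence, since the new first pair $(i,j)$ satisfies $j\in\Supp(((i,j)S)_2)=\Supp(S)$ and $i\neq j$, and no other condition is affected. This legitimizes speaking of ``$(i,j)S$'' as an augmentation sequence before asking whether it is minimum. The remaining subtlety is that Proposition~\ref{prop:augmenting-condition} requires $j$ to lie in the taxon set of $N=\aug SI$, which is exactly $\Supp(S)$; this is precisely the hypothesis $j\in\Supp(S)$, so everything matches up.

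\textbf{Main obstacle.} There is no real obstacle here: the statement is essentially a restatement, and the only thing to be careful about is keeping the two directions of item (2) straight and making sure each invocation of an earlier result has its hypotheses met (non-triviality for Proposition~\ref{thm:mcrs_vs_mrp}, and $j$ in the taxon set for Proposition~\ref{prop:augmenting-condition}). The mild bookkeeping of identifying $\aug{(i,j)}{(\aug SI)}$ with $\aug{(i,j)S}I$ via the definition of $\aug{\cdot}{\cdot}$ for sequences is the one place one must be explicit, but it is immediate from the definitions.
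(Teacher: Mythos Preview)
Your proposal is correct and matches the paper's own treatment: the paper presents this proposition explicitly as a restatement of Propositions~\ref{thm:mcrs_vs_mrp} and~\ref{prop:augmenting-condition} in the language of sequences, with no further proof given. Your careful unwinding of the dictionary (identifying $\aug{(i,j)}{(\aug SI)}$ with $\aug{(i,j)S}I$ and checking the hypotheses of the cited propositions) is exactly the intended argument, spelled out in more detail than the paper bothers with.
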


We give now two results that characterize the suffixes of minimum augmentation sequences and, in particular, show that the last pair in such a sequence has a well determined form.

\begin{lem}\label{lem:minimal_suffix}
  Let $S=s_1\cdots s_k$ be a minimum augmentation sequence. Then, every suffix $S_t=s_t\cdots s_k$ ($t=1,\dots,k$) is a minimum augmentation sequence.
\end{lem}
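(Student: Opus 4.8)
The statement is that every suffix of a minimum augmentation sequence is itself a minimum augmentation sequence. The natural approach is by downward induction on $t$, or equivalently: it suffices to prove that if $S=s_1 S_2$ is a minimum augmentation sequence (where $S_2=s_2\cdots s_k$), then $S_2$ is a minimum augmentation sequence; the general claim then follows by iterating. So I would fix a minimum augmentation sequence $S = s_1\cdots s_k$ with $s_1=(i_1,j_1)$, and aim to show $S_2$ is minimum, i.e. $S_2 = \MCRS(S_2)$.

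\textbf{Key steps.} First, I would check that $S_2$ is an augmentation sequence at all: this is immediate from the definition, since the conditions $j_t\in\Supp(S_{t+1})$ and $i_t\neq j_t$ for $t=2,\dots,k-1$ are a subset of those imposed on $S$. Next, observe the structural relation between the networks: $N = \aug{S}{I} = \aug{(i_1,j_1)}{(\aug{S_2}{I})}$, so writing $M = \aug{S_2}{I}$ we have $N = \aug{s_1}{M}$, and $S_2$ is a complete reducible sequence for $M$ while $S$ is one for $N$. Now suppose, for contradiction, that $S_2$ is \emph{not} minimum, i.e. there is a complete reducible sequence $S_2' \in \CRS(M)$ with $S_2' < S_2$. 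The plan is to prepend $s_1$ and derive a contradiction with minimality of $S$. For this I need: (a) that $s_1$ is still a reducible pair in $\aug{S_2'}{I}$, so that $s_1 S_2'$ is a complete reducible sequence for $N' := \aug{s_1}{(\aug{S_2'}{I})}$; and (b) that $N' \cong N$, so that $s_1 S_2' \in \CRS(N)$. Point (b) follows from Theorem~\ref{thm:same-crs-implies-iso}: both $M$ and $\aug{S_2'}{I}$ admit... wait — actually the cleanest route is: $S_2$ is complete reducible for both $M$ and $\aug{S_2'}{I}$? No. Let me instead use that $S_2'$ and $S_2$ are both complete reducible sequences — $S_2'$ for $M$ and for $\aug{S_2'}I$ — hmm. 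The right statement: $\aug{S_2'}{I}$ has $S_2'$ as a complete reducible sequence, and I want to compare it with $M = \aug{S_2}{I}$. These need not be isomorphic. So contradiction-via-isomorphism is the wrong framing.

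\textbf{Corrected key step.} Instead I would argue directly with $\MCRS$. Let $S_2 = \MCRS(M)$? Not given — only $S = \MCRS(N)$ is assumed. Here is the robust argument: by Proposition~\ref{thm:mcrs_vs_mrp_bis}(1), $s_1 = \MRP(S)$, and since $S = s_1 S_2$ is minimum, applying the reduction $s_1$ to $N$ gives $N^{s_1} \cong \aug{S_2}{I} = M$, with $S_2$ complete reducible for $M$. Suppose $S_2 \neq \MCRS(M)$; let $\widehat{S} = \MCRS(M)$, so $\widehat S < S_2$ and $\widehat S \in \CRS(M)$, with $M = \aug{\widehat S}{I}$ necessarily (by Theorem~\ref{thm:same-crs-implies-iso} applied to $M$ and $\aug{\widehat S}{I}$, both having $\widehat S$ as complete reducible sequence). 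Then $\aug{s_1}{M} \cong \aug{s_1}{(\aug{\widehat S}{I})} = \aug{s_1\widehat S}{I}$, and $\aug{s_1}{M} = \aug{s_1}{(N^{s_1})} \cong N$ since augmentation inverts reduction. Thus $s_1\widehat S$ is a complete reducible sequence for $N$ (one must check $s_1 = (i_1,j_1)$ with $j_1 \in X$ still holds, which it does since reduction/augmentation preserve the taxa appropriately, and $s_1$ reducible in $\aug{s_1\widehat S}{I}$ by the remark following the definition of augmentation). Hence $s_1\widehat S \in \CRS(N)$, and since $\widehat S < S_2$ we get $s_1\widehat S < s_1 S_2 = S$, contradicting $S = \MCRS(N)$. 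Therefore $S_2 = \MCRS(M) = \MCRS(S_2)$, so $S_2$ is a minimum augmentation sequence. Iterating (or inducting on $t$) gives the claim for every suffix $S_t$.

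\textbf{Main obstacle.} The delicate point is the bookkeeping around isomorphism and the ``$=$ vs $\cong$'' identifications: one must be careful that $\MCRS(M)$ being a complete reducible sequence for $M$ forces $M \cong \aug{\MCRS(M)}{I}$ (this uses Theorem~\ref{thm:same-crs-implies-iso}), and that prepending $s_1$ commutes with these isomorphisms and genuinely produces a sequence in $\CRS(N)$ — in particular that $s_1$ remains reducible after replacing the tail, which relies on $s_1 = \MRP(S)$ depending only on $N$ (equivalently on $\aug{S_2}{I} \cong \aug{\widehat S}{I}$, which have the same ARP by Theorem~\ref{thm:augmentation_ARP}-style reasoning, or simply because they are isomorphic). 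Once these identifications are handled cleanly, the inequality chain $s_1\widehat S < S$ is immediate and the rest is routine.
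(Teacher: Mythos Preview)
Your argument is correct and follows the same route as the paper's proof: assume a suffix $S_t$ is not minimum, pick a strictly smaller $S_t'\in\CRS(S_t)$, and observe that the concatenation $s_1\cdots s_{t-1}S_t'$ lies in $\CRS(S)$ and is strictly smaller than $S$, contradicting minimality. The paper does this directly for arbitrary $t$ in two lines, without the induction on $t$ or the appeal to $\MRP$; your isomorphism bookkeeping, while not wrong, is unnecessary given the paper's standing convention that isomorphic networks are identified.
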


\begin{proof}
    It is clear that $S_t\in\CRS(S_t)$. If there existed some $S_t'\in\CRS(S_t)$ with $S_t'<S_t$, then the concatenation $S'=s_1\cdots s_{t-1}S_t'$ would be strictly smaller than $S$ and also $S'\in\CRS(S)$, against the minimality of $S$.
\end{proof}

\begin{prop}\label{prop:last_pair_in}
  Let $(i,j)$ be the last pair in a minimum augmentation sequence $S\in\cS([n],r)$. Then, $j=n$.
\end{prop}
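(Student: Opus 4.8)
The plan is to work directly with the augmentation sequence $S\in\cS([n],r)$ and the network $N=\aug SI$ it generates, recalling from Section~\ref{sec:augmentation} that when $N=\aug SI$ the last pair of $S$ must be of the form $(i,l)$ where $l$ is the single leaf of the trivial network $I_l$ that $S$ builds upon. So the last pair is $(i,j)$ with $j$ being this distinguished leaf. The claim is precisely that $j=n$, i.e.\ the leaf on which the whole augmentation is founded is the \emph{largest} element of $[n]$.

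First I would argue by contradiction: suppose the last pair is $(i,j)$ with $j\ne n$. Since $S\in\cS([n],r)$, its support is exactly $[n]$, so $n\in\Supp(S)$ and $n$ appears in some pair of $S$; moreover $j<n$. The key step is to use Lemma~\ref{lem:minimal_suffix}: the suffix $S_k=s_k=(i,j)$ is itself a minimum augmentation sequence, and indeed every suffix $S_t$ is minimum. I would then look at the ``founding'' step more carefully. The network $\aug{(i,j)}I$ is the trivial network $I_j$ with a cherry $(i,j)$ attached — concretely, it is a root with a tree node below it whose two children are the leaves $i$ and $j$. In this network $(i,j)$, $(j,i)$ are both cherries (by the remarks preceding Theorem~\ref{thm:augmentation_ARP}, Case $i\notin X$). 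Hence $\MRP\big(\aug{(i,j)}I\big)=\min\{(i,j),(j,i)\}$, which forces $i<j$ for $(i,j)$ to be the minimum reducible pair; by Proposition~\ref{thm:mcrs_vs_mrp_bis}(1) applied to the minimum augmentation sequence $S_k$, indeed $i<j$. So far this only constrains $i$ and $j$ relative to each other.

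The heart of the argument is then to show that if $j<n$ we could produce a strictly smaller complete reducible sequence for $N$, contradicting minimality of $S=\MCRS(N)$. The idea: since $n$ occurs in $S$ but is not the founding leaf, there is a last occurrence of $n$, say in pair $s_m=(i_m,j_m)$ with $m<k$; because $S$ is an augmentation sequence, $j_m\in\Supp(S_{m+1})$, and the only "new" element that can be removed (become unavailable) below $m$ is governed by the support condition. I would analyze the two cases for the last pair mentioning $n$. If $n=j_m$ appears as the second coordinate, then $n$ is a leaf that gets "absorbed" and, reading the reduction process on $N$ forward, one sees $n$ is involved in a cherry or reticulated cherry with some smaller leaf at that stage; one can then reorder the reduction so that a pair with second coordinate $n$ is pushed to the very end, yielding a complete reducible sequence $S'$ with last pair $(i',n)$. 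Comparing $S$ and $S'$ lexicographically, because $n$ is maximal the competing earlier pairs of $S'$ are no larger than those of $S$ — here is where I would need to be careful that the reordering does not increase an earlier pair. The cleanest route is: among all $S'\in\CRS(N)$, the minimum $S=\MCRS(N)$ has, by Proposition~\ref{thm:mcrs_vs_mrp_bis}(1) applied recursively to all suffixes, the property that each $s_t=\MRP(\aug{S_t}I)$; so it suffices to show $\MRP(M)$ has first coordinate $<n$ whenever the underlying trivial leaf of $M$ is not $n$ and $M$ is nontrivial — equivalently, that the last-but-one network $\aug{s_{k-1}s_k}I$ already has a reducible pair not equal to $(i_{k-1},j_{k-1})$ that is smaller, unless $j=n$.

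The main obstacle I anticipate is this last reordering/exchange argument: showing rigorously that a complete reducible sequence ending in a pair with second coordinate $n$ exists and that passing to it (then to its minimum) cannot create a lexicographically larger sequence than $S$. I expect the actual proof in the paper may instead proceed more slickly — perhaps by directly exhibiting, whenever $j\ne n$, a reducible pair of $N$ smaller than $s_1=\MRP(N)$ using the fact that $n$ appears in $N$ as a leaf whose parent structure forces a reducible pair $(a,b)$ with $a<n$ or with $b=n$ small enough — contradicting that $s_1$ is the global minimum reducible pair. Concretely: in $N=\aug SI$ the leaf $j$ (the founding leaf) always has, near the end of the reduction, the largest "staying power," so if there is a strictly larger leaf $n$ present, $n$ must get reduced away earlier, and at the moment $n$ is reduced it forms a reducible pair $(i_m, n)$ or $(n,\cdot)$; tracing this back via Theorem~\ref{thm:augmentation_ARP} to $\ARP(N)$ would show $N$ itself has a reducible pair witnessing that $\MRP(N)$ cannot have $n$ as its founding leaf unless $j=n$. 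I would bet the published proof uses exactly such a "the minimum founding leaf must be $n$" observation combined with Lemma~\ref{lem:minimal_suffix}, rather than a full exchange argument.
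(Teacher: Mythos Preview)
Your setup is correct and matches the paper: argue by contradiction assuming $j\neq n$, use Lemma~\ref{lem:minimal_suffix} to get that every suffix is minimum, and observe that the length-one suffix $(i,j)$ being minimum forces $i<j$ (hence $i\neq n$, which disposes of the paper's first case). You also correctly look at the last index $m$ where $n$ occurs.

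The gap is that you never exploit the augmentation-sequence condition you yourself quote. Since $m$ is the \emph{last} occurrence of $n$, we have $n\notin\Supp(S_{m+1})$; but $j_m\in\Supp(S_{m+1})$ by definition of an augmentation sequence, so necessarily $j_m\neq n$ and therefore $i_m=n$. Your ``case $n=j_m$'' is thus vacuous, and no exchange or reordering argument is needed. Now, because $n\notin\Supp(S_{m+1})$, the augmentation $(n,j_m)$ adds a \emph{new} leaf and hence creates a cherry in $\aug{S_m}I$; in particular $(j_m,n)$ is also a cherry there. Since $j_m<n$ we have $(j_m,n)<(n,j_m)$, contradicting $(n,j_m)=\MRP(\aug{S_m}I)$, which holds by Proposition~\ref{thm:mcrs_vs_mrp_bis}(1) applied to the minimum suffix $S_m$.

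This is precisely the ``slicker'' route you conjectured at the end: the paper's proof is exactly this two-line contradiction, with no tracing back through Theorem~\ref{thm:augmentation_ARP} and no global reordering of $S$. Your proposed exchange argument would be substantially harder to make rigorous and is unnecessary.
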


\begin{proof}    
    Write $\MCRS(N)$ as $S=(i_1,j_1)\cdots(i_k,j_k)$, where $(i_k,j_k)=(i,j)$ and assume that $j\neq n$.

    Suppose first that $i=n$ and $j<n$. From Lemma~\ref{lem:minimal_suffix}, $(n,j)$ is a minimum complete reduction sequence (of the cherry $\aug{(n,j)}I$), but in this case $(j,n)$ is also a complete reduction sequence, and $(j,n)<(n,j)$, leading to a contradiction.

    Now, we can assume that $i,j<n$.
    Let $t<k$ be such that $(i_t,j_t)$ is the last pair where one of its entries is $n$ (it exists because $S\in\cS([n],r)$, and hence $n\in\Supp(S)$).
Now, $S_{t+1}=(i_{t+1},j_{t+1})\cdots(i_k,j_k)$ is a minimum augmentation sequence thanks to Lemma~\ref{lem:minimal_suffix}. Since $n$ does not belong to $\Supp(S_{t+1})$, but does belong to the support of $S_t=(i_t,j_t)S_{t+1}$, we have that $i_t=n$ and $j_t<n$.
Moreover, $(n,j_t)$ is a cherry in $N'=\aug{S_{t}}I$. Thanks again to Lemma~\ref{lem:minimal_suffix}, $S_t=(n,j_t)S_{t+1}$ is a minimum augmentation sequence and, 
thanks to Proposition~\ref{thm:mcrs_vs_mrp}, $(n,j_t)=\MRP(N')$. However, since $(n,j_t)$ is a cherry of $N'$, then $(j_t,n)$ is also a cherry, and $(j_t,n)<(n,j_t)$, leading to a contradiction.
\end{proof}

For every $m\in\{1,\ldots,n-1\}$ we define the set 
$$\cS_m([n],r)=\{\Tilde{S} \mid \Tilde{S}\in\cS([n],r)\text{ and $\Tilde{S}$ ends in $(m,n)$}\}.$$
A direct consequence of Proposition~\ref{prop:last_pair_in} is the following result, that states that 
the computation of $\cS([n],r)$ is reduced to the computation of the subsets $\cS_m([n],r)$.

\begin{prop}
\label{prop:baseAlg2}
    $\cS([n],r)=\bigsqcup_{m=1}^{n-1} \cS_m ([n],r)$.
\end{prop}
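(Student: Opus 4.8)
The plan is to show that $\cS([n],r)$ is the union of the sets $\cS_m([n],r)$ for $m=1,\dots,n-1$, and that these sets are pairwise disjoint. The disjointness is immediate: a sequence $\Tilde S$ determines its last pair uniquely, so $\Tilde S$ cannot end in $(m,n)$ and in $(m',n)$ for two distinct values $m\neq m'$; hence $\cS_m([n],r)\cap\cS_{m'}([n],r)=\emptyset$ whenever $m\neq m'$. This gives that the right-hand side is a genuine disjoint union, justifying the $\bigsqcup$ notation.

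For the equality of sets, the inclusion $\bigsqcup_{m=1}^{n-1}\cS_m([n],r)\subseteq\cS([n],r)$ is trivial, since by definition each $\cS_m([n],r)$ is a subset of $\cS([n],r)$. The reverse inclusion is where Proposition~\ref{prop:last_pair_in} does the work: let $S\in\cS([n],r)$ and let $(i,j)$ be its last pair. By Proposition~\ref{prop:last_pair_in} we have $j=n$. It remains to check that $i\in\{1,\dots,n-1\}$, i.e.\ that $i\neq n$; but $S$ is an augmentation sequence, and the last pair $(i,j)=(i,n)$ must satisfy $i\neq j$ by the definition of augmentation sequence, so $i\neq n$ and therefore $i\in\{1,\dots,n-1\}$. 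Setting $m=i$, we conclude $S\in\cS_m([n],r)$. This establishes $\cS([n],r)\subseteq\bigsqcup_{m=1}^{n-1}\cS_m([n],r)$ and hence the claimed equality.

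I do not expect any real obstacle here: the statement is essentially a repackaging of Proposition~\ref{prop:last_pair_in} together with the trivial observation that the last entry $i$ of the final pair cannot equal $n$. The only point requiring a line of care is confirming that every index $m\in\{1,\dots,n-1\}$ is actually realized, but this is not needed for the identity as stated (empty $\cS_m$'s simply contribute nothing to the union), so the proof can be kept to a few lines.
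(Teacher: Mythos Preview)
Your proposal is correct and matches the paper's approach: the paper itself presents Proposition~\ref{prop:baseAlg2} as ``a direct consequence of Proposition~\ref{prop:last_pair_in}'' without giving further details, and your argument simply spells out that consequence together with the obvious disjointness and the observation that $i\neq j=n$ forces $i\in\{1,\dots,n-1\}$.
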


We know, from Proposition~\ref{prop:last_pair_in}, the form of the last pair in a minimum augmentation sequence. It is also clear that any such pair $(m,n)$ is a minimum augmentation sequence.
Our next result shows how minimum augmentation sequences can be extended by prepending pairs of integers in order to generate other minimum augmentation sequences.

\begin{thm}\label{thm:augmentation_sequences}
    Let $S'=(i,j)S$ be an augmentation sequence. Then, $S'$ is a minimum augmentation sequence if, and only if, $S$ is a minimum augmentation sequence and $(i,j)=\MRP((i,j)S)$. In such a case, say that $S\in \cS(X,r)$ and $S'\in \cS(X',r')$. If $i\in\Supp(X)$, then $r'=r+1$ and $X'=X$; otherwise, $r'=r$ and $X'=X\cup\{i\}$.
\end{thm}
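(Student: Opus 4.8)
The plan is to split the biconditional into its two implications, leaning on the already-established dictionary between augmentation sequences and the networks they generate, together with Proposition~\ref{thm:mcrs_vs_mrp_bis}. For the ``only if'' direction, assume $S'=(i,j)S$ is a minimum augmentation sequence. Lemma~\ref{lem:minimal_suffix} applied with $t=2$ immediately gives that the suffix $S$ is itself a minimum augmentation sequence. That $(i,j)=\MRP((i,j)S)$ is then exactly part~(1) of Proposition~\ref{thm:mcrs_vs_mrp_bis}, since $(i,j)$ is the first pair of the minimum augmentation sequence $S'$. For the ``if'' direction, assume $S$ is a minimum augmentation sequence and $(i,j)=\MRP((i,j)S)$; since we are told $S'=(i,j)S$ is an augmentation sequence to begin with, part~(2) of Proposition~\ref{thm:mcrs_vs_mrp_bis} (equivalently Proposition~\ref{prop:augmenting-condition}) yields that $S'$ is a minimum augmentation sequence. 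So the equivalence is essentially a repackaging of results already in hand; I would state it that way and keep this part short.

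The substantive content is the bookkeeping in the last two sentences, describing how the taxa set and reticulation count change. Here I would work directly with $N=\aug SI$ and $N'=\aug{(i,j)}N=\aug{S'}I$, using the definition of the augmentation operation on a single pair. Recall that the definition of augmentation sequence forces $j\in\Supp(S)=X$, so $j$ is always a leaf of $N$; the dichotomy is governed by whether $i\in X$ or not. If $i\notin X$, the augmentation of $(i,j)$ creates a brand-new leaf labelled $i$ by subdividing the arc into $j$ and attaching $i$ below the new (tree) node; this adds one leaf and no reticulation, so $X'=X\cup\{i\}$ and $r'=r$. If $i\in X$, the augmentation subdivides both the arc into $i$ and the arc into $j$, introducing a reticulation $p_i$ and a tree node $p_j$ and the arc $p_jp_i$; no leaf is added, exactly one reticulation is added, so $X'=X$ and $r'=r+1$. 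These two cases are precisely the two bullet points in the definition of $\aug{(i,j)}N$, so this is a matter of reading off the effect on leaves and reticulations.

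I would also note the small compatibility check that there is no subtlety in counting reticulations: in the first case ($i\notin X$) the new node $p_j$ has in-degree $1$ and out-degree $2$, so it is a tree node, not a reticulation, confirming $r'=r$; in the second case the new node $p_i$ genuinely has in-degree $2$ (its old parent plus $p_j$) and out-degree $1$, so it is a reticulation, confirming $r'=r+1$. The paper's earlier remark that the number of reticulations of a sequence can be counted as the number of indices $t$ with $i_t\in\Supp(S_{t+1})$ gives an alternative, purely combinatorial route: prepending $(i,j)$ adds the index $t=1$ to that count precisely when $i\in\Supp(S_2)=\Supp(S)=X$, which matches the claim. Either viewpoint works; I would present the geometric one since it parallels the definitions most directly.

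The main obstacle, such as it is, is not a deep one: it is making sure the quantifier $j\in\Supp(X)$ in the statement (a slight abuse — it should read $j\in X$, or the hypotheses should be read as coming from $S'$ being an augmentation sequence in the first place) is correctly invoked so that the single-pair augmentation $\aug{(i,j)}N$ is actually defined, and then cleanly matching the two bullet points of the augmentation definition to the two cases in the conclusion. Everything else is a citation of Lemma~\ref{lem:minimal_suffix} and Proposition~\ref{thm:mcrs_vs_mrp_bis}.
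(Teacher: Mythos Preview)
Your proposal is correct and follows essentially the same approach as the paper, which disposes of the theorem in a single line by saying the non-trivial parts follow from Proposition~\ref{thm:mcrs_vs_mrp_bis}. You are simply more explicit: you separately invoke Lemma~\ref{lem:minimal_suffix} for the claim that $S$ is minimum (a step the paper leaves implicit), and you spell out the leaf/reticulation bookkeeping that the paper treats as trivial.
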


\begin{proof}
  The non-trivial parts of the statement follow from Proposition~\ref{thm:mcrs_vs_mrp_bis}.
\end{proof}

Using these results, is easy to give a procedure that generates all the orchard networks over a set $[n]$ of taxa and with a given number $r$ of reticulations. Indeed, it is enough to generate, for each positive integer $m<n$, the set $\cS_m([n],r)$, and the latter can be generated as follows:
\begin{enumerate}
    \item Start with the sequence $S=(m,n)$, of length $1$, with support $X=\{m,n\}$, and whose set of  annotated reducible pairs is $\ARP=\ARP(S)=\{(m,n)^\tC,(n,m)^\tC\}$. 
    \item Recursively, given a sequence of pairs $S$, with support $X$, and given also the set $\ARP=\ARP(S)$, find all possible pairs $(i,j)$ such that $(i,j)=\MRP((i,j)S)$. For each such $(i,j)$, consider the extended sequence $S'=(i,j)S$ with support $X'=S\cup\{i\}$ and with set of annotated reducible pairs $\ARP'=\aug{s}\ARP$.
    \item If $X=[n]$ and the length of the obtained sequence is $|S|=n+r-1$, then 
    yield the sequence $S$. 
\end{enumerate}

\begin{thm}\label{thm:correctness_generation}
  The set of sequences yielded by the procedure above is $\cS([n],r)$.
\end{thm}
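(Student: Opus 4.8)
The plan is to show a two-way containment: every sequence yielded by the procedure lies in $\cS([n],r)$, and every sequence of $\cS([n],r)$ is yielded. The key structural fact, already available, is Proposition~\ref{prop:baseAlg2}, which decomposes $\cS([n],r)$ into the disjoint pieces $\cS_m([n],r)$; since the procedure runs once for each $m<n$, it suffices to prove that the run starting from $S=(m,n)$ yields exactly $\cS_m([n],r)$. For this I would argue by induction on the length of the sequences produced, using Theorem~\ref{thm:augmentation_sequences} as the engine that both justifies each prepending step and guarantees that nothing is missed.

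First I would establish soundness. The base sequence $S=(m,n)$ is an augmentation sequence with support $\{m,n\}$ and, by direct inspection, $\ARP(S)=\{(m,n)^\tC,(n,m)^\tC\}$; moreover $(m,n)$ is trivially a minimum augmentation sequence (it equals $\MCRS$ of the single cherry $\aug{(m,n)}I$). Inductively, if the current $S$ is a minimum augmentation sequence and the recorded set $\ARP$ really is $\ARP(S)$, then for any pair $(i,j)$ with $(i,j)S$ an augmentation sequence (so $j\in\Supp(S)$) and $(i,j)=\MRP((i,j)S)$, Theorem~\ref{thm:augmentation_sequences} tells us that $S'=(i,j)S$ is again a minimum augmentation sequence; Theorem~\ref{thm:augmentation_ARP} tells us that $\ARP(S')=\aug{(i,j)}{\ARP(S)}$, so the bookkeeping is maintained. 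Hence every sequence reaching step~(3) is a minimum augmentation sequence, and the side condition there ($X=[n]$, length $n+r-1$) forces it to have support $[n]$ and, by the reticulation count recalled before Proposition~\ref{thm:mcrs_vs_mrp_bis}, exactly $r$ reticulations; since it was built from the base pair $(m,n)$, which is never altered once placed, it ends in $(m,n)$, so it belongs to $\cS_m([n],r)$.

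Next, completeness. Take any $\Tilde S\in\cS_m([n],r)$, write $\Tilde S=s_1\cdots s_k$ with $s_k=(m,n)$ and $k=n+r-1$. By Lemma~\ref{lem:minimal_suffix}, every suffix $\Tilde S_t=s_t\cdots s_k$ is a minimum augmentation sequence, and $\Tilde S_k=(m,n)$ is exactly the base sequence for the run indexed by $m$. I would show by downward induction on $t$ that $\Tilde S_t$ is produced by the procedure: given that $\Tilde S_{t+1}$ is produced (so in particular it is a minimum augmentation sequence whose $\ARP$ is correctly recorded), the pair $s_t=(i_t,j_t)$ satisfies $j_t\in\Supp(\Tilde S_{t+1})$ because $\Tilde S$ is an augmentation sequence, and $s_t=\MRP(\Tilde S_t)=\MRP(s_t\Tilde S_{t+1})$ by part~(1) of Proposition~\ref{thm:mcrs_vs_mrp_bis} applied to the minimum augmentation sequence $\Tilde S_t$; hence $s_t$ is exactly one of the pairs the recursion at step~(2) enumerates, so $\Tilde S_t=s_t\Tilde S_{t+1}$ is produced. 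Taking $t=1$ shows $\Tilde S=\Tilde S_1$ is produced, and since its support is $[n]$ and its length is $n+r-1$, step~(3) yields it. Combining with Proposition~\ref{prop:baseAlg2} gives the claimed equality.

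The main obstacle I anticipate is not conceptual but bookkeeping-precision: one must be careful that step~(2) genuinely enumerates \emph{all} pairs $(i,j)$ with $(i,j)S$ an augmentation sequence and $(i,j)=\MRP((i,j)S)$, and that the check $(i,j)=\MRP((i,j)S)$ can be carried out from $\ARP(S)$ alone --- this is where Theorem~\ref{thm:augmentation_ARP} (which computes $\ARP((i,j)S)$ from $\ARP(S)$, and in particular decides whether $(i,j)$ itself becomes reducible and with which character) does the real work, and one should note that only finitely many candidate $i\in[n]$ and $j\in\Supp(S)$ need be tried so the enumeration is effective. Once that is pinned down, the induction closes cleanly.
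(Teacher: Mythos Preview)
Your proposal is correct and follows essentially the same two-containment strategy as the paper: soundness by forward induction on prefixes via Theorem~\ref{thm:augmentation_sequences} together with the support/length check in step~3, and completeness by downward induction on suffixes using that each suffix of a minimum sequence is again minimum. You are simply more explicit than the paper in citing Lemma~\ref{lem:minimal_suffix} and Proposition~\ref{thm:mcrs_vs_mrp_bis}(1) for the completeness direction and in spelling out the $\ARP$ bookkeeping via Theorem~\ref{thm:augmentation_ARP}.
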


\begin{proof}
  Let $S$ be a sequence yielded by the procedure.
  The condition that $S$ has support $[n]$ and has $r$ reticulations is guaranteed by the condition in step 3 of the procedure. The condition that $S$ is a minimum augmentation sequence follows by applying recursively Theorem~\ref{thm:augmentation_sequences}, thanks to the condition in step 2, and with the starting condition in step 1 being justified by Proposition~\ref{prop:baseAlg2}.

  Conversely, if $S=s_1\cdots s_k\in \cS([n],r)$, then  $s_k=(m,n)$ for some $m$ (thanks to Proposition~\ref{prop:baseAlg2}), and it will be considered in step 1. At each step,  considering the suffix $S_t=s_t\cdots s_k$ in step 2, the pair $s_{t-1}$ will fulfill the conditions (thanks to Theorem~\ref{thm:augmentation_sequences}), and hence $S_{t-1}=s_{t-1}s_t\cdots s_k$ will be considered in the next iteration. Finally, in step 3, the sequence $S$ will be yielded.
\end{proof}

Some remarks are due:
\begin{enumerate}
    \item The set $\aug{s}\ARP$ in step 2 can be computed using Theorem~\ref{thm:augmentation_ARP}, and it can be done in linear time with respect to the length of $S$. Also, if the pairs in $\ARP$ are stored increasingly ordered with respect to the lexicographic ordering, then the computation of $\aug{s}\ARP$ can be performed so that $\aug{s}\ARP$ keeps being ordered and, in particular, its minimum element can be found in constant time.
    \item Another advantage of storing the pairs in $\ARP$ ordered is that, in order to determine if $(i,j)=\MRP((i,j)S)$, one does not need to compute the whole set $\aug{(i,j)}{\ARP}$. Indeed, in the process of building $\aug{(i,j)}{\ARP}$, at most three pairs in $\ARP$ can disappear, and hence one only needs to take the first four elements in $\ARP$, decide which of them belong to $\aug{(i,j)}{\ARP}$, and test if $(i,j)$ is smaller than each of those.
    \item Given a minimum augmentation sequence $S$, it is possible that it can not be extended to another minimum augmentation sequence $(i,j)S$ if we want to keep the number of reticulations. For instance, if we consider the sequence $S=(1,2)(2,4)\in\cS_2(\{1,2,4\},0)$, the only possible extensions that keep the number of reticulations are obtained by prepending one of the pairs $(3,1)$, $(3,2)$ or $(3,4)$; however, none of these sequences is minimum, as can be easily checked in each case.
    \item The search of extensions can be pruned. For instance, if at a given stage, the sequence $S$ has $r$ reticulations, the only pairs $(i,j)$ that have to be considered are those with $i\notin \Supp(S)$, since otherwise the number of reticulations would be greater than $r$.
    \item Also in the case that we are adding a cherry (that is, when $i\notin \Supp(S)$) we can restrict ourselves to the case that $i<j$, since otherwise $(j,i)$ would be a reducible pair in $\aug{(i,j)S}I$, and since $(j,i)<(i,j)$, it is impossible that $(i,j)=\MRP((i,j)S)$.
    \item The algorithm can be easily modified, so that instead of generating all the sequences with exactly $r$ reticulations, it generates all sequences with \emph{at most} $r$ reticulations.
\end{enumerate}


\section{Generation of tree-child networks}
\label{sec:generation_tc}

A network is \emph{tree-child} if every node that is not a leaf has a child that is a tree node~\cite{Cardona2009}. For brevity, we shall simply say that each interior node has a \emph{tree child}.

The same procedure we have described to generate orchard networks can be adapted to generate all tree-child networks over $[n]$, by adding some conditions to ensure that the generated sequences correspond to tree-child networks.

First, we need to decide when the reductions and augmentations defined in the previous sections produce tree-child networks.

We start with the following result, adapted from \cite[Lemma~4.1]{Bordewich2016}, that states that reductions of tree-child networks are tree-child networks.
\begin{lem}
    \label{lem:tc-is-orch}
    Let $N$ be a tree-child network. Then, $N$ is an orchard network and, if $(i,j)\in\RP(N)$, then $N^{(i,j)}$ is also a tree-child network.
\end{lem}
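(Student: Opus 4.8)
The plan is to split the claim into its two parts and handle them in order. The first part, that every tree-child network is orchard, follows immediately from the cited \cite[Lemma~4.1]{Bordewich2016} together with the definition of orchard network in Section~\ref{sec:orchard}; in fact it is implied by the second part once one knows that a tree-child network always has a reducible pair (a classical fact: a tree-child network with at least two leaves has a cherry or a reticulated cherry), since then one can reduce repeatedly, each time staying tree-child and decreasing the size, until reaching a trivial network. So the substance is the second part: if $(i,j)\in\RP(N)$ then $N^{(i,j)}$ is tree-child.

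For the second part I would argue by cases according to $\chi_N(i,j)$, inspecting locally which nodes of $N^{(i,j)}$ could possibly fail the tree-child condition. First I would record that the tree-child property is local in the sense that a node $u$ and its children that are untouched by the reduction still satisfy the condition, so only nodes near $p_i$ and $p_j$ need checking. In the cherry case $\chi_N(i,j)=\tC$: here $p_i=p_j$ is a tree node whose two children are the leaves $i$ and $j$; removing $i$ and simplifying $p_i$ replaces the edge from $p_i$'s parent, call it $g$, straight to $j$. The only node whose child-set changes is $g$: it loses the child $p_i$ and gains the child $j$. I would check that $g$ still has a tree child: if the other child of $g$ was already a tree node we are done; if not, then since $N$ is tree-child $g$ must have had a tree child, which could only have been $p_i$, but now $j$ (a leaf) sits where $p_i$ was — and a leaf is not a tree node. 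This is the delicate point, so the argument must instead use that a leaf counts: re-reading the definition, ``tree-child'' requires a child that is a tree node, and a leaf is \emph{not} a tree node; however $j$ being a leaf means $g$'s replaced child is a leaf, so one needs the \emph{other} child of $g$ to be a tree node — and indeed it must be, because in $N$ the node $g$ had children $p_i$ (a tree node, but about to vanish) and some other node $w$; if $w$ were a reticulation then $g$'s only tree child was $p_i$, and after the operation $g$ has children $j$ (leaf) and $w$ (reticulation), violating tree-child. So the honest resolution is that this bad configuration cannot occur in a tree-child $N$ with the pair $(i,j)$ a cherry: I would show $w$ cannot be a reticulation by noting $p_i$ is a tree node whose parent is $g$, and tracing the tree-child condition at $w$ and its ancestors, or more simply by invoking the known structural lemma from \cite{Bordewich2016} directly rather than re-deriving it.

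In the reticulated-cherry case $\chi_N(i,j)=\tR$: here $p_i$ is a reticulation with parents $p_j$ and some other node $q$, and $p_j$ is a tree node with children $p_i$ and some node $w$. Removing the arc $p_jp_i$ and simplifying makes $q$ the direct parent of $i$ and makes $p_j$'s parent the direct parent of $w$. The nodes to check are the parent of $p_j$ (now points to $w$), $q$ (now points to leaf $i$), and $w$ itself. I would verify each retains a tree child using that $N$ was tree-child: $q$ now has $i$ as a child but may have lost nothing else problematic; the parent of $p_j$ had $p_j$ (a tree node, vanishing) as one child, so we need its other child to be a tree node, and again this is exactly the kind of local check that the cited lemma packages.

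Honestly, the cleanest write-up simply cites \cite[Lemma~4.1]{Bordewich2016} for the reduction-stability statement and derives the orchard claim as a one-line corollary. I expect the main obstacle, were one to prove it from scratch, to be the cherry case above: showing that after contracting the cherry parent, the grandparent $g$ still has a tree child, which forces one to rule out the configuration where $g$'s only tree child was the contracted node — and that ruling-out is precisely where the global tree-child hypothesis on $N$ is needed, via the fact that reticulations cannot be ``stacked'' with their tree-child obligations left unmet. I would therefore present the proof as: \emph{The first assertion follows from the second together with the fact that every tree-child network with more than one leaf has a reducible pair; the second assertion is \cite[Lemma~4.1]{Bordewich2016}.}
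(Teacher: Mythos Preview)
Your bottom line---cite \cite[Lemma~4.1]{Bordewich2016} for the reduction-stability and derive the orchard claim from it---is exactly what the paper does; no self-contained proof is given there either.

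Your exploratory sketch, however, manufactures a difficulty that is not there. In the cherry case you worry that after simplifying $p_i$ the grandparent $g$ might be left with children $j$ (a leaf) and $w$ (a reticulation), and you read the definition literally as requiring a child that is a ``tree node'' in the strict sense of the preliminaries, excluding leaves. But the tree-child condition, both in the standard literature and as the paper actually uses it (see the proof of Theorem~\ref{thm:reduction_tc}, where leaves $i$ and $j$ are explicitly called tree nodes), only asks for a child that is \emph{not a reticulation}; a leaf qualifies. So $g$ having the leaf $j$ as a child already witnesses the condition, and there is nothing to rule out. The configuration you tried to prove impossible---$g$ with one tree-node child $p_i$ and one reticulation child $w$ in $N$---is perfectly legal in a tree-child network and causes no trouble after the reduction.

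There is also a slip in your reticulated-cherry sketch: by definition of reticulated cherry the children of $p_j$ are $p_i$ and the leaf $j$ itself, not $p_i$ and ``some node $w$''. With this corrected, and once leaves are allowed to witness the tree-child condition, both cases become immediate: every node whose child-set changes under a reduction simply trades one of its children for a leaf, and a leaf is never a reticulation.
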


In order to decide whether or not an augmentation of a tree-child network is tree-child, we need to introduce new terminology.
Let $N$ be a network over $X\subseteq[n]$. Then, we define the \emph{state} $\sigma_N(i)$ of $i\in[n]$ as follows:
\begin{itemize}
\item if $i\notin X$, then $\sigma_N(i)=\tN$;
\item otherwise, if the parent of $i$ is a reticulation, then $\sigma_N(i)=\tP$;
\item otherwise, if the sibling of $i$ is a reticulation, then $\sigma_N(i)=\tS$;
\item otherwise, $\sigma_N(i)=\tT$.
\end{itemize}

If the network is clear from the context we shall simply write $\sigma(i)$ instead of $\sigma_N(i)$. Then, $\sigma_N : [n] \to \{\tN, \tP, \tS, \tT\}$ is a mapping that gives the state of each $i\in[n]$ in $N$. We also define the \emph{state of a network} $N$ as $\sigma(N)=(\sigma_N(1), \ldots, \sigma_N(n))$.
Finally, if $S$ is an augmentation sequence, we shall denote $\sigma(S)=\sigma(\aug{S}{I})$.

The following result gives the conditions under which an augmentation produces a tree-child network.

\begin{thm}\label{thm:reduction_tc}
  Let $N$ be an orchard network. Then,
$\expair{i}{j}{N}$ is a tree-child network if, and only if, $N$ is tree-child and $\sigma_N(i)\in\{\tN, \tT\}$.
\end{thm}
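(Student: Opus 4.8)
The plan is to argue both implications by a node-by-node comparison of $N$ with $N':=\aug{(i,j)}{N}$, using throughout the characterization that a binary network is tree-child exactly when every internal (non-leaf) node has a child that is not a reticulation --- a leaf being an admissible such child. Writing $X$ for the taxon set of $N$, recall from Section~\ref{sec:augmentation} that $N'$ is obtained by subdividing the arc into $j$ with a new tree node $p_j$ and either adding a new leaf $i$ below $p_j$ (when $i\notin X$) or subdividing the arc into $i$ with a new reticulation $p_i$ and adding the arc $p_jp_i$ (when $i\in X$); in both cases $(i,j)$ is a reducible pair of $N'$ and reducing $(i,j)$ in $N'$ recovers $N$, the two operations being mutually inverse.

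For the ``only if'' direction, assume $N'$ is tree-child. Since $(i,j)\in\RP(N')$, Lemma~\ref{lem:tc-is-orch} applied to $N'$ shows that $(N')^{(i,j)}$ is tree-child; but $(N')^{(i,j)}\cong N$, so $N$ is tree-child. To see $\sigma_N(i)\in\{\tN,\tT\}$, note it is immediate when $i\notin X$, so suppose $i\in X$ and rule out $\sigma_N(i)\in\{\tP,\tS\}$. If $\sigma_N(i)=\tP$, the parent $p$ of $i$ in $N$ is a reticulation; in $N'$ the arc $pi$ has been subdivided by the reticulation $p_i$, so $p$ is an internal node whose unique child $p_i$ is a reticulation, contradicting tree-childness. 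If $\sigma_N(i)=\tS$, the parent $v$ of $i$ in $N$ is a tree node whose other child $s$ is a reticulation; since $s\neq j$ (as $s$ is a reticulation and $j$ a leaf) the arc into $s$ is untouched, so in $N'$ the tree node $v$ has children $p_i$ and $s$, both reticulations, again contradicting tree-childness. Hence $\sigma_N(i)=\tT$.

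For the ``if'' direction, assume $N$ is tree-child and $\sigma_N(i)\in\{\tN,\tT\}$, and check that every internal node of $N'$ has a non-reticulation child. The new leaf $i$ (when $i\notin X$) carries no condition; the new tree node $p_j$ has the leaf $j$ among its children; the new reticulation $p_i$ (when $i\in X$) has the leaf $i$ as its child; the former parent $u_j$ of $j$ now has the tree node $p_j$ in place of the leaf $j$ among its children. Every node other than these, and --- when $i\in X$ --- the former parent $v$ of $i$, has exactly the same children as in $N$, hence still a non-reticulation child. It remains to treat $v$: here $\sigma_N(i)=\tT$ (the alternative $\sigma_N(i)=\tN$ is exactly the case $i\notin X$, where no such $v$ arises), so $v$ is a tree node whose child $s$ (the sibling of $i$ in $N$) is not a reticulation; as $s\neq j$, it is unaffected by the augmentation, so in $N'$ the node $v$, whose children are $p_i$ and $s$, still has the non-reticulation child $s$. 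Therefore $N'$ is tree-child.

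The crux of the argument --- and the only place where the exact form $\sigma_N(i)=\tT$ is needed rather than merely $i\in X$ --- is the node $v$ that is the parent of $i$: the augmentation converts one of its children into the reticulation $p_i$, so tree-childness is preserved precisely when its remaining child was already a non-reticulation, i.e.\ precisely when $\sigma_N(i)\neq\tS$; and $\sigma_N(i)=\tP$ is the companion failure, where $i$'s parent is itself a reticulation and acquires $p_i$ as its only child. I expect the remaining steps to be routine bookkeeping about which arcs get subdivided; the one subtlety worth stating explicitly is the convention that a leaf counts as a ``tree child'', which is what keeps the freshly created node $p_j$ (and, in particular, the base cherry case) tree-child.
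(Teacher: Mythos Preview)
Your proof is correct and follows the same approach as the paper's own proof: both directions use Lemma~\ref{lem:tc-is-orch} together with a case analysis ruling out $\sigma_N(i)\in\{\tP,\tS\}$ for the forward implication, and a node-by-node check of the tree-child condition near the augmentation for the converse. One small slip: in the backward direction you assert ``$s\neq j$'' without justification, and in fact $s=j$ can occur (namely when $(i,j)$ is already a cherry of $N$); this does not damage your argument, however, since in that case $v=u_j$ and your earlier treatment of $u_j$ (which acquires the tree child $p_j$) already covers it.
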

\begin{proof}
  Let $N'=\expair{i}{j}{N}$. From Lemma~\ref{lem:tc-is-orch} we know that if $N'$ is tree-child, $N$ is also tree-child.

  Now, suppose that $\sigma_N(i)=\tP$. Then, $i$ is a leaf in $N$ and its parent $p_i$ is a reticulation. When applying the augmentation $(i,j)$, the arc $p_ii$ is split introducing a new node $v$ that shall become a reticulation. Then, in $N'$, the only child of $p_i$ is $v$, which is a reticulation. Therefore, $N'$ is not tree-child. 
  
  Similarly, suppose that $\sigma_N(i)=\tS$. Then, $i$ is a leaf in $N$, its parent $p_i$ is a tree node and its sibling $s_i$ is a reticulation. Again, in the process of applying the augmentation $(i,j)$, the arc $p_ii$ is subdivided introducing a new reticulation $v$. Thus, the children of $p_i$ in $N'$ are $v$ and $s_i$, both reticulations, so $N'$ is not tree-child, against the hypothesis. Therefore, $\sigma_N(i)\notin\{\tP,\tS\}$, which is equivalent to $\sigma_N(i)\in\{\tN,\tT\}$.

  Conversely, assume that $N$ is tree-child. Due to the local nature of the augmentation processes, the condition that each node (other than a leaf) in $N'$ has a tree child needs only to be tested for the nodes that are adjacent to the leaves involved in the augmentation.

  First, assume that $\sigma_N(i)=\tN$, and let $p_j$ be the parent of $j$ in $N$. The augmentation process creates a tree node $v$ in $N'$ with children $i,j$ and parent $p_j$. Now, $p_j$ keeps having a tree child (the node $v$), and the new internal node $v$ has both children that are tree nodes (the leaves $i$  and $j$). Hence, the condition of being tree-child is preserved.

  Second, assume that $\sigma_N(i)=\tT$, which implies that $i\in X$  and hence the augmentation process creates two elementary nodes: $u$ (a tree node) between $j$ and its parent $p_j$, and $v$ (a reticulation) between $i$ and its parent $p_i$. Also, since $\sigma_N(i)=\tT$, we have that the sibling $s_i$ of $i$ in $N$ (that is, the child of $p_i$ in $N$ different from $i$) is a tree node. In $N'$, $p_j$ has $u$ as a tree child, $p_i$ has $s_i$, $u$ has $j$, and $v$ has $i$. Hence, the condition of being tree-child is preserved. 
\end{proof}

We describe now how to compute $\sigma(\expair{i}{j}{N})$ from $\sigma(N)$. For simplicity, we write $N'=\expair{i}{j}{N}$, $\sigma=\sigma_N$ and $\sigma'=\sigma_{N'}$, and we will restrict to the cases of interest that $\sigma(i)\in\{\tN,\tT\}$.

\begin{itemize}
\item Case $\sigma(i)=\tN$. In this case, $(i,j)$ is a cherry in $N'$, and therefore $\sigma'(i)=\sigma'(j)=\tT$. From the local behavior of the augmentation, for any other leaf $l$ in $N$, its parent (and its sibling, in case it has one) remains the same. Therefore, we conclude that $\sigma'(i)=\sigma'(j)=\tT$ and $\sigma'(l)=\sigma(l)$ for all $l\in[n]\setminus\{i,j\}$.

\item Case $\sigma(i)=\tT$. In this case, $(i,j)$ is a reticulated cherry in $N'$, hence $\sigma'(i)=\tP$ and $\sigma'(j)=\tS$. Thanks again to the local behaviour of the augmentation, the state of a leaf $l$ in $N'$ can only differ from its state in $N$ if its parent or sibling change from being a tree node to a reticulation (or viceversa). Hence, only siblings of $i$ and $j$ have to be taken into consideration.
If $j$ was the sibling of another leaf $l$ in $N$, then $l$ would still have a sibling in $N'$ that is a tree node (namely, the parent of $j$ in $N'$) and hence the state of $l$ would not change. If $i$ was the sibling of another leaf $l$ in $N$, which can be written as $(i,l)^\tC\in\ARP(N)$, then $l$ would change from having a sibling that is a tree node (the leaf $i$) to having a sibling that is a reticulation (the parent of $i$ in $N'$). Hence $\sigma(l)=\tT$ but $\sigma'(l)=\tS$.
\end{itemize}

We can summarize these computations in the following result.

\begin{thm}
  \label{thm:comp_sigma}
Let $N$ be a tree-child network over $X\subseteq[n]$ with state function $\sigma$. Let $j\in X$ and $i\in[n]$, $i\neq j$, with $\sigma(i)\in\{\tN,\tT\}$. 
Consider  the function $\aug{(i,j)}\sigma:[n]\to \{\tN,\tP,\tS,\tT\}$ defined as follows:
\begin{itemize}
  \item If $\sigma(i)=\tN$,
    \begin{itemize}
    \item $\aug{(i,j)}{\sigma}(i)=\aug{(i,j)}{\sigma}(j)=\tT$,
    \item $\aug{(i,j)}{\sigma}(l)=\sigma(l)$ for all $l\in[n]\setminus\{i,j\}$.
    \end{itemize} 
  \item If $\sigma(i)=\tT$,
    \begin{itemize}
    \item $\aug{(i,j)}{\sigma}(i)=\tP$,
    \item $\aug{(i,j)}{\sigma}(j)=\tS$,
    \item $\aug{(i,j)}{\sigma}(l)=
    \begin{cases}
      \tS, & \text{if } (i,l)^\tC\in\ARP(N), \\
      \sigma(l), & \text{ otherwise}
    \end{cases}$ for all $l\in[n]\setminus\{i,j\}$.
  \end{itemize}
\end{itemize}
Then $\expair{i}{j}{N}$ is a tree-child network over $X\cup\{i\}$ with state function $\aug{(i,j)}{\sigma}$.
\end{thm}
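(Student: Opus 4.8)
The plan is to verify directly that the function $\aug{(i,j)}{\sigma}$ described in the statement agrees with $\sigma_{N'}$ for $N' = \expair{i}{j}{N}$, using the case analysis for the augmentation of reducible pairs (Theorem~\ref{thm:augmentation_ARP}) and the definition of the state function. The fact that $N'$ is tree-child over $X\cup\{i\}$ is exactly Theorem~\ref{thm:reduction_tc} together with the hypothesis $\sigma(i)\in\{\tN,\tT\}$ (and the observation, already noted after the definition of augmentation, that $N'$ has taxon set $X$ when $i\in X$ and $X\cup\{i\}$ when $i\notin X$), so the only work is the claim $\sigma_{N'}=\aug{(i,j)}{\sigma}$.

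First I would record the key structural facts about $N'$, exactly as set out in the proof of Theorem~\ref{thm:reduction_tc}: if $\sigma(i)=\tN$ then $i\notin X$, a tree node $v$ is created with children the leaves $i,j$ and parent $p_j$, so that $(i,j)$ becomes a cherry; if $\sigma(i)=\tT$ then $i\in X$, elementary nodes $u$ (a tree node, between $j$ and $p_j$) and $v$ (a reticulation, between $i$ and $p_i$) are created, and $(i,j)$ becomes a reticulated cherry with $p_i$ still having the tree-node sibling $s_i$ of $i$. Then I would run through the leaves:

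In the case $\sigma(i)=\tN$: the leaf $i$ has parent $v$ (a tree node) whose other child is the tree node $j$, so $\sigma'(i)=\tT$; symmetrically $\sigma'(j)=\tT$. For any other leaf $l\in[n]\setminus\{i,j\}$: if $l\notin X$ then $l\notin X\cup\{i\}$ (since $i\ne l$) and $\sigma'(l)=\tN=\sigma(l)$; if $l\in X$, by the local nature of the augmentation its parent and its sibling (if any) are unchanged, so $\sigma'(l)=\sigma(l)$. This matches the first bulleted clause of $\aug{(i,j)}{\sigma}$. In the case $\sigma(i)=\tT$: the leaf $i$ now has parent $v$, a reticulation, so $\sigma'(i)=\tP$; the leaf $j$ has parent $u$, a tree node, whose sibling is the reticulation $v$, so $\sigma'(j)=\tS$. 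For $l\in[n]\setminus\{i,j\}$, the local nature of the augmentation means $\sigma'(l)$ can differ from $\sigma(l)$ only if the parent or sibling of $l$ switches between tree-node and reticulation; the parent of $l$ is untouched, so the only relevant change is to $l$'s sibling, and the only sibling that changes is that of the leaves adjacent to the newly created nodes. If $l$ had $j$ as a sibling in $N$, then $p_j$ is now subdivided and $l$'s new sibling is the tree node $u$, so $\sigma'(l)=\sigma(l)$; if $l$ had $i$ as a sibling in $N$ — equivalently $(i,l)^\tC\in\ARP(N)$ — then $l$'s sibling in $N'$ is the reticulation $v$, so $\sigma(l)=\tT$ becomes $\sigma'(l)=\tS$; in all remaining cases nothing adjacent to $l$ changes and $\sigma'(l)=\sigma(l)$. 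This is exactly the second bulleted clause.

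The only subtle point — and the main place to be careful rather than the main obstacle — is the completeness of the case analysis for $l$ when $\sigma(i)=\tT$: one must be sure that no leaf other than $i$, $j$, or a former sibling of $i$ can change state, which is guaranteed by the locality of the augmentation (nodes disjoint from $\{p_i,p_j,i,j\}$ and their incident arcs are untouched) together with the fact that $j$'s former sibling keeps a tree-node sibling. There is also the degenerate possibility that $i$ and $j$ were siblings in $N$ (so $(i,j)^\tC\in\ARP(N)$), but this cannot occur here since $\sigma(i)=\tT$ forces $i$'s sibling to be a tree node that is not a leaf, and in any case $j$ is treated explicitly. Assembling these observations gives $\sigma_{N'}=\aug{(i,j)}{\sigma}$, which completes the proof.
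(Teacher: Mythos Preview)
Your proposal is correct and follows essentially the same approach as the paper: the paper's proof is precisely the case analysis in the paragraphs immediately preceding the theorem, and your argument mirrors it almost verbatim (locality, explicit treatment of $i$ and $j$, and the sibling-of-$i$ versus sibling-of-$j$ distinction in the $\sigma(i)=\tT$ case).

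One small correction to your closing remark: the claim that ``$\sigma(i)=\tT$ forces $i$'s sibling to be a tree node that is not a leaf'' is false. The state $\tT$ only says that the parent of $i$ is not a reticulation and the sibling of $i$ is not a reticulation; the sibling may well be a leaf, and in particular $(i,j)$ can be a cherry in $N$ with $\sigma(i)=\tT$. Fortunately, as you yourself note, this does not matter because $j$ is handled explicitly and the case analysis for $l\in[n]\setminus\{i,j\}$ is unaffected; just drop the incorrect justification.
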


We denote by $\cT(X, r)$ the subset of $\cS(X, r)$ formed by sequences $S$  such that ${}^SI$ is a tree-child network. 
Thanks to Lemma~\ref{lem:tc-is-orch}, the set $\cT(X, r)$ is in bijection with the set of tree-child networks over the set of taxa $X$ and with $r$ reticulations. Also, notice that $\cT([n],r)=\bigsqcup_{m=1}^{n-1} \cT_m ([n],r)$, where $\cT_m([n],r)$ denotes, for every $m\in\{1,\ldots,n-1\}$, the subset of $\cS_m([n], r)$ formed by sequences $S$ such that ${}^SI$ is tree-child. Finally, notice that in the case of tree-child networks, the number of reticulations $r$ is bounded by $n-1$~\cite[Proposition~1]{Cardona2009}.

Then, we can modify the procedure that generates all orchard networks over $[n]$ with $r$ reticulations to generate all tree-child networks over $[n]$ with $r$ reticulations, provided that $r<n$.

Indeed, for every positive integer $m<n$, we can generate the sets $\cT_m([n], r)$ as follows:
\begin{enumerate}
\item Start with the sequence $S=(m,n)$, of length $1$, with support $X=\{m,n\}$, with set of annotated reducible pairs $\ARP=\ARP(S)=\{(m,n)^\tC,(n,m)^\tC\}$ and with state $\sigma=\sigma(S)$ whose entries are all $\tN$ except the $m$-th and $n$-th entry which are $\tT$. 

\item Recursively, given a sequence of pairs $S$ (and assuming that $\aug{S}I$ is tree-child), with support $X$, and given also the set $\ARP=\ARP(S)$ and the state $\sigma=\sigma(S)$, find all possible pairs $(i,j)$ such that $\sigma(i)\in\{\tN,\tT\}$ and  $(i,j)=\MRP((i,j)S)$. For each such $(i,j)$, consider the extended sequence $S'=(i,j)S$ with support $X'=S\cup\{i\}$, with set of annotated reducible pairs $\ARP'=\aug{s}\ARP$ and with state $\sigma'=\aug{s}{\sigma}$.

\item If $X=[n]$ and the length of the obtained sequence is $|S|=n+r-1$, then 
    yield the sequence $S$. 
\end{enumerate}

\begin{thm}\label{thm:correctness_generation_tc}
  The set of sequences yielded by the procedure above is $\cT([n],r)$.
\end{thm}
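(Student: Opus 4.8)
The plan is to follow the same two-part argument as in the proof of Theorem~\ref{thm:correctness_generation}, and simply track the extra bookkeeping coming from the state function and the tree-child restriction in step~2. The key new ingredients are Lemma~\ref{lem:tc-is-orch}, which guarantees that reductions of tree-child networks stay tree-child (so that every suffix of a sequence yielding a tree-child network again yields a tree-child network), Theorem~\ref{thm:reduction_tc}, which identifies exactly the augmentations preserving tree-childness as those with $\sigma_N(i)\in\{\tN,\tT\}$, and Theorem~\ref{thm:comp_sigma}, which tells us that the state $\sigma'=\aug{s}{\sigma}$ computed in step~2 is genuinely $\sigma(\aug{S'}I)$.

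First I would prove that every sequence $S$ yielded by the procedure lies in $\cT([n],r)$. The condition checked in step~3 (support $[n]$, length $n+r-1$) guarantees, exactly as in Theorem~\ref{thm:correctness_generation}, that $S\in\cS([n],r)$; it remains to see that $\aug{S}I$ is tree-child. This follows by induction on the number of prepended pairs: the base sequence $(m,n)$ from step~1 yields the cherry $\aug{(m,n)}I$, which is tree-child, and at each recursive step in step~2 we only prepend a pair $(i,j)$ with $\sigma(i)\in\{\tN,\tT\}$, so by Theorem~\ref{thm:reduction_tc} the augmented network $\aug{(i,j)S}I=\aug{(i,j)}{(\aug SI)}$ is again tree-child. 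The inductive hypothesis also requires that $\sigma$ (resp.\ $\ARP$) stored alongside $S$ equals $\sigma(\aug SI)$ (resp.\ $\ARP(\aug SI)$); this is maintained by Theorem~\ref{thm:comp_sigma} (resp.\ Theorem~\ref{thm:augmentation_ARP}), so the test $\sigma(i)\in\{\tN,\tT\}$ in step~2 is being applied to the correct state function. Hence $S\in\cT([n],r)$.

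Conversely, I would take $S=s_1\cdots s_k\in\cT([n],r)$ and show it is yielded. By definition $S\in\cS([n],r)$ and $\aug SI$ is tree-child, so by Proposition~\ref{prop:baseAlg2} we have $s_k=(m,n)$ for some $m<n$, and this is the starting sequence of step~1. Now one shows by downward induction on $t$ that each suffix $S_t=s_t\cdots s_k$ is considered by the procedure: $S_t$ is a minimum augmentation sequence by Lemma~\ref{lem:minimal_suffix}, and $\aug{S_t}I$ is a reduction of $\aug SI$, hence tree-child by Lemma~\ref{lem:tc-is-orch}. For the inductive step, writing $s_{t-1}=(i,j)$, Theorem~\ref{thm:augmentation_sequences} gives $(i,j)=\MRP((i,j)S_t)$, and since $\aug{S_{t-1}}I$ is tree-child while $\aug{S_t}I$ is tree-child, Theorem~\ref{thm:reduction_tc} forces $\sigma_{\aug{S_t}I}(i)\in\{\tN,\tT\}$; thus $(i,j)$ meets both conditions tested in step~2 and $S_{t-1}$ is considered in the next iteration. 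Finally $S=S_1$ passes the test in step~3 and is yielded.

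The main obstacle is making the induction hypothesis precise and verifying that the auxiliary data ($\sigma$, $\ARP$, $X$) carried by the procedure really are what the later theorems say they are, so that the local tests in step~2 are meaningful; this is purely a matter of invoking Theorems~\ref{thm:augmentation_ARP} and~\ref{thm:comp_sigma} and checking the base case of step~1. Everything else is a direct transcription of the proof of Theorem~\ref{thm:correctness_generation} with the extra tree-child clause handled by Theorem~\ref{thm:reduction_tc} and Lemma~\ref{lem:tc-is-orch}.
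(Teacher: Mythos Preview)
Your proposal is correct and follows exactly the approach the paper takes: the paper's proof is a one-line reference back to Theorem~\ref{thm:correctness_generation}, invoking Theorem~\ref{thm:reduction_tc} for the tree-child restriction, and your write-up is simply a careful unpacking of that sentence. The additional citations you make to Lemma~\ref{lem:tc-is-orch} (to see that suffixes of a tree-child sequence remain tree-child) and to Theorems~\ref{thm:augmentation_ARP} and~\ref{thm:comp_sigma} (to confirm the auxiliary data $\ARP$ and $\sigma$ carried by the procedure are correct) are implicit in the paper's argument and your making them explicit is appropriate.
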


\begin{proof}
  The result follows using the same reasoning as in Theorem~\ref{thm:correctness_generation}, using now Theorem~\ref{thm:reduction_tc} to ensure that the yielded networks are tree-child.
\end{proof}

Some remarks follow:
\begin{enumerate}
  \item The state function $\sigma'$ in step 2. can be computed using Theorem~\ref{thm:comp_sigma}, and notice that the information in $\ARP$ is also needed.
\item As in the case of orchard networks, the procedure can be adapted to yield all tree-child networks over $[n]$ with at most $r$ reticulations. In particular, since tree-child networks over $[n]$ have at most $n-1$ reticulations \cite[Proposition~1]{Cardona2009}, we can generate all of them.

\item The procedure given for generating tree-child networks can be adapted to generate all stack-free~\cite{semple2018phylogenetic} orchard networks, simply checking if $\sigma(i)\neq\tP$ instead of checking if $\sigma(i)\in\{\tN,\tT\}$.
\end{enumerate}


\section{Computational experiments}
\label{sec:computational}

The procedure to generate orchard and tree-child networks described in this paper has been implemented in C. Source files, documentation and examples are available in the repository~\url{https://github.com/gerardet46/OrchardGenerator}. Notice that the output of the implementation are complete reducible sequences, given as strings, and that they can be used as input to build and manipulate networks using the Python package PhyloNetworks~\cite{PhyloNetwork}.

There are some interesting details to comment. First, as we said, the set $\ARP$ is kept ordered, and the cherries $(i,j)$ with $i>j$ are ignored. Taking this into account, notice that if $N$ is an orchard network on $X\subseteq[n]$, it holds that $|\ARP(N)|\le\frac{2}{3}|X|\le \frac{2}{3}n$ (and there is always an orchard network such that the equality holds). Therefore, the set $\ARP$ can be implemented as an static array, which is much faster than a dynamic one. Also, given a sequence $S=s_1\ldots s_k$, we store the set $\{\ARP(s_k),\ARP(s_{k-1}s_k),\ldots,\ARP(S)\}$ for faster access when trying different candidate extensions.

Notice also that the only data needed to store the networks is $X$, $S$ and $\ARP$ (and $\sigma$ for tree-child networks), but there is no need to store the network $N$ itself. Also, the operations involved in the algorithm are very simple, so they could be easily implemented in C, optimizing the performance.

We have also implemented a random orchard network generator that follows the same lines of the procedure to generate all the networks, but choosing a random pair at each step in order to produce a sequence, instead of trying all the candidates. Notice however that this generator does not generate networks uniformly. Indeed, even at the first step,  the number of $\MCRS$ ending in $(n-1,n)$ is greater than the number of those ending in $(1,n)$.

Finally, the algorithm can be parallelized, considering a partition of suffixes and creating a process for each subset of suffixes, which generates all sequences ending in a suffix from the corresponding subset.

Using this implementation, we have computed the number of orchard networks for small number of leaves and reticulations, shown in Table~\ref{tab:number_orchard}. 
As for the generation of tree-child networks, it is  worth to mention the speed of the computation compared to previously implemented methods. Indeed, Table~\ref{tab:speed} shows the time of execution for the generation of all tree-child networks with $n=5$ leaves using the implementations of the results in this paper compared to those in \cite{cardona2019generation} and \cite{cardona2020counting}.

\begin{table}[!ht]
\begin{tabular}{c|rrrrr}
\toprule
$r \backslash n$ & 2   & 3       & 4          & 5             & 6               \\ \midrule
0                  & 1   & 3       & 15         & 105           & 945             \\
1                  & 2   & 21      & 228        & 2\,805          & 39\,330           \\
2                  & 4   & 132     & 2\,832       & 57\,150         & 1\,185\,300         \\
3                  & 8   & 804     & 32\,880      & 1\,054\,200       & 31\,481\,280        \\
4                  & 16  & 4\,848    & 370\,320     & 18\,520\,320      & 783\,492\,840       \\
5                  & 32  & 29\,136   & 4\,107\,648    & 316\,583\,280     & 18\,766\,151\,280     \\
6                  & 64  & 174\,912  & 45\,197\,952   & 5\,323\,207\,200    & 438\,647\,126\,400    \\
7                  & 128 & 1\,049\,664 & 495\,183\,360  & 88\,589\,126\,400   & 10\,087\,314\,094\,080  \\
8                  & 256 & 6\,298\,368 & 5\,412\,422\,400 & 1\,464\,596\,709\,120  & 229\,383\,137\,571\,840 \\
\midrule
Time                 & 0.00s   & 0.02s & 5.99s & 1\,693.39s & 470\,828.27s  \\ \bottomrule
\end{tabular}
\caption{Number of orchard networks with $n$ leaves and $r$ reticulations, for $2\le n\le 6$ and $0\le r\le 8$, together with the total time used to compute these numbers, for each value of $n$.}
\label{tab:number_orchard}
\end{table}

\begin{table}[]

    \begin{tabular}{l|r}
    \toprule
    Implementation in Python from \cite{cardona2020counting}     &  9m19.249s \\
    Implementation in Python from \cite{cardona2019generation}     &  7m23.162s \\
    Implementation in C of the current paper     &  0m00.056s \\
    \bottomrule     
    \end{tabular}
        \caption{Time needed for the generation of all tree-child networks with $n=5$ leaves using different implementations.}
    \label{tab:speed}
\end{table}


\section{Conclusions}
\label{sec:conclusions}

Phylogenetic networks model evolutionary relationships among organisms and overcome the limitations of using phylogenetic trees by allowing the representation of reticulate processes. 

In this paper, we have considered the problem of the efficient and injective generation of all orchard and tree-child networks (with a given number of leaves and reticulations), two special classes of phylogenetic networks with biological relevance \cite{kong2022classes}.  Our method is based on considering sequences of pairs of integers that characterize those networks \cite{Janssen2021} and finding a subset of those (called minimum complete reducible sequences) that characterize the networks injectively.

To this end, we have first shown that such a sequence must end in a pair $(m,n)$, where $n$ is the desired number of leaves and $m<n$, and that we can iteratively extend the sequences by prepending new pairs to generate the sequences that encode the networks. This method is efficient since there is no need to construct the network itself in order to check if the candidate sequence effectively corresponds to an orchard (or tree-child) network.

The implementation of the algorithms described in the paper allows a fast generation of the sequences (and implicitly of the networks). For example, our implementation is capable of generating all orchard networks with $4$ leaves and at most $8$ reticulations, of which there are about $6$ billions of them, in approximately $6$ seconds.\footnote{Computation performed on a computer with two processors Intel\textsuperscript{\tiny\textregistered} Xeon\textsuperscript{\tiny\textregistered} E5-2690 (3.00GHz), providing 40 CPUs.} For  tree-child networks, we have shown that our method is much faster than other methods previously published and implemented. 

There are some natural questions that arise as a possible future work, mainly in the direction of extending our results to the generation of other classes of phylogenetic networks. One of the possible generalizations is getting rid of the binary condition and generating semi-binary and non-binary (orchard and tree-child) networks. In this sense, the results in \cite{Janssen2021} could be applied, considering more possible annotations of pairs, in order to cover the six different reductions that this paper considers.
Another direction could be trying to use other topological conditions on the networks to be generated. For instance, and as we have commented at the end of Section~\ref{sec:generation_tc}, only a small change in our method is needed in order to generate   stack-free orchard networks. Another potential subclass of networks where our methods could apply is the class of normal networks \cite{willson2010properties}, which is a subclass of tree-child networks where shortcuts are not allowed (that is, if two nodes are linked by an arc, then they cannot be connected by another path).

\section*{Acknowledgment} 
GC and JCP were supported by Grant PID2021-126114NB-C44 funded by MCIN/AEI/10.13039/\\
501100011033 and by ``ERDF A way of making Europe''.

\bibliographystyle{alpha}
\bibliography{biblio}

\end{document}